\newcommand{\FF}{\mathbb{F}}
\newcommand{\dv}{{\rm d}}
\newtheorem{theorem}{Theorem}
\newtheorem{lemma}{Lemma}
\newtheorem{proposition}{Proposition}
\newtheorem{definition}{Definition}
\newtheorem{formula}{Formula}
\def\hs4{ \ \ \ \ }
\begin{document}

\title{Excitation spectrum for the Lee-Huang-Yang model for the bose gas in a periodic box}

\date{\today}

\author{Stephen Sorokanich\thanks{The work of the author was partially supported by Grant No.~1517162 of the Division of Mathematical Sciences (DMS) of the National Science Foundation (NSF).}
}

\maketitle

\begin{abstract}
   We derive a new formula for excited many-body eigenstates of the approximate Hamiltonian of Lee, Huang, and Yang \color{black}\cite{Lee Huang Yang I, Lee Yang I, Lee Yang II}\color{black},~ which describes the weakly interacting bose gas in a periodic box. Our formula utilizes a non-Hermitian transform to isolate a process of \textit{pair excitation} in all excited states. This program is inspired by the description of Wu \color{black}\cite{Wu61,Wu98}\color{black}~ for the many-body ground state of the Lee-Huang-Yang Hamiltonian, as well as similar descriptions for the ground state of quadratic Hamiltonians in generic trapping potentials \color{black}\cite{Fetter}\color{black}.  
\end{abstract}
\section{Introduction}
\label{sec:Intro}

\color{black} Quadratic Bosonic Hamiltonians have been the subject of increasing mathematical attention. Rigorous results now exist which describe the approximation of many-body Hamiltonians for the interacting bose gas by effective quadratic Hamiltonians, as well as the mathematical foundations for their diagonalization via unitary rotation. Both of these areas of study display mathematical methods which were originally inspired by techniques developed in physics.\cite{Napiorkowski, Seiringer, Derezinski}\color{black}~

A less popular method in the physical study of the Bose gas involves the non-unitary transformation of approximate Hamiltonians \color{black}\cite{Wu61,Fetter}\color{black}. Such transformations describe a pairwise-scattering process (\textit{pair excitation}) which results in the partial depletion of the condensate to excited states, and is hypothesized to contribute to corrections to the ground-state wavefunction beyond mean-field descriptions \color{black}\cite{Machedon}\color{black}. The use of such a method has been present since early on in the study of the Bose gas. In 1961, Wu \color{black}\cite{Wu61}\color{black}~ applied a bounded exponential transformation to a particle-conserving approximate Hamiltonian, $\mathcal{H}_\mathrm{Wu}$, for repulsively-interacting bosons in a trapping potential. A kernel $k(x,y)$ describes the spatial variation of this scattering process via the second quantized operator

\begin{equation}\label{Wuexp}
\mathcal{W}:=\frac{(a_{\overline{\phi}})^2}{2N}\int{dxdy\{k(x,y)a^\ast_{\perp,x}a^\ast_{\perp,y}\}},
\end{equation}
which is introduced in order to transform the Hamiltonian via $\exp(\mathcal W)\mathcal{H}_\mathrm{Wu}\exp(-\mathcal W)$. The kernel is specified \textit{a posteriori} so that the transformed Hamiltonian contains no terms proportional to the product of field operators $(a^\ast_{\perp,x} a^\ast_{\perp,y})$. We recently placed this method on a rigorous footing by proving the existence of solutions to the integro-differential equation for $k(x,y)$ under mild assumptions on the interaction potential \color{black}\cite{Grillakis Margetis Sorokanich}\color{black}. The structure of the transformed Hamiltonian $\exp(\mathcal W)\mathcal H_\mathrm{Wu}\exp(-\mathcal W)$ allowed us to write a simple formula for all excited many-body states by making a specific choice of basis. 

In the present work, we show the utility of a similar non-Hermitian approach in the context of a quadratic and non-particle-conserving approximate Hamiltonian based on the famous model of Lee, Huang and Yang \color{black}\cite{Lee Huang Yang I, Lee Yang I, Lee Yang II} \color{black}. This model describes low-lying excitations of the hard-sphere bose gas in a periodic box with volume $L^3$ via the following second-quantized Hamiltonian in the momentum basis:
\begin{equation}\label{HLHY}
\mathcal{H}_\mathrm{LHY} = 4\pi a \rho N + \sum_{k\in\mathbb{Z}^3_L,\, k\not= 0}{\big\{k^2+8\pi a \rho\big\}a^\ast_k a_k} + 4\pi a \rho\sum_{k\in\mathbb{Z}^3_L,\, k\not= 0}{\big\{a^\ast_k a^\ast_{-k}+a_k a_{-k}\big\}}.
\end{equation}
In this setting we introduce an analog to the operator \eqref{Wuexp}:
\begin{equation}\label{LHYexp}
\mathcal{P}:=\sum_{k\in\mathbb{Z}^+_L}{-\alpha(k) a_k^\ast a^\ast_{-k}},
\end{equation}
where the real parameter $\alpha(k)$ is the Fourier transform of a translation-invariant kernel $k(x,y)$.
We emphasize that the transformed Hamiltonian $\exp(-\mathcal{P})\mathcal{H}_\mathrm{LHY}\exp(\mathcal{P})$ is now both non-Hermitian and \emph{unbounded}, and the point spectrum $\sigma_\mathrm{p}(\mathcal{H}_\mathrm{LHY})$ is not equal to the point spectrum $\sigma_\mathrm{p}(\exp(-\mathcal{P})\mathcal{H}_\mathrm{LHY}\exp(\mathcal{P}))$. In what follows, we show that the spectrum of $\mathcal{H}_\mathrm{LHY}$ can be `recovered' in a straightforward manner from the spectrum of this transformed operator, and surprisingly, all excited state wavefunctions for $\mathcal{H}_\mathrm{LHY}$ have a particularly simple expression by utilizing the operator $\exp(\mathcal{P})$. Wu observed that the ground state for the Lee-Huang-Yang Hamiltonian can be written using this $\mathcal{P}$:
$$
|\Psi_0\rangle = \exp(-\mathcal{P})|vac\rangle.
$$
We extend this result to show that all excited states of $\mathcal{H}_\mathrm{LHY}$ take the form 
\begin{equation}\label{essential eigenstate}
|\Psi_{E(\vec{n})}\rangle :=\exp(-\mathcal{P})|\Psi_{\vec{n}}\rangle,
\end{equation}
where $|\Psi_{\vec{n}}\rangle$ is a \emph{finite superposition} of momentum tensor product states (see Formula \eqref{formula1}). This is a unique feature of using the non-Hermitian pair excitation transform---factoring out the operator $\exp(\mathcal{P})$ yields a simple formula for excited many-body states in a physically transparent basis.

\subsubsection*{A note on unitary rotations of quadratic Hamiltonians}


It is true that all information about the spectrum and excited states of Hamiltonian \eqref{HLHY} can be derived by means of its unitary transformation. The operators defined by
$$
b_k := \cosh(\alpha(k))a_k + \sinh(\alpha(k))a_{-k}^*,\quad\mathrm{for}\quad 0\le\alpha(k)<1 
$$
satisfy the canonical commutation relations of bosonic operators, and 
the particular choice of $\alpha(k)$ that we will use in \eqref{LHYexp} makes $\mathcal{H}_\mathrm{LHY}$ diagonal in  $\{b^\ast_k, b_k\}$,  that is
\begin{equation}
\mathcal{H}_\mathrm{LHY}= \sum_{k}{\epsilon(k) b_k^\ast b_k},\quad 0<\epsilon(k)<\epsilon({k'})\quad\forall\quad |k|<|k'|.
\end{equation}

This transformation can alternatively be expressed as $b_k = \exp(\mathcal{X})a_k \exp(-\mathcal{X})$, where $\mathcal{X}$ is the Hermitian operator
\begin{equation}\label{unitary transformation}
\mathcal{X} := \sum_{k\in\mathbb{Z}^+_L}{\alpha(k)(a^\ast_k a^\ast_{-k} - a_k a_{-k})}.
\end{equation}
Excited many-body states of $\mathcal{H}_\mathrm{LHY}$ are given by tensor products of the $b$ operators
\begin{equation}\label{quasiparticle}
|\Psi_{\{n_k\}}\rangle:=\prod_{k\in\mathbb{Z}^+_L}{(b_k^\ast)^{n_k}}|\Psi_0\rangle,\quad n_k\in\mathbb{N}\quad \forall k,\quad \sum_{k}{n_k}<\infty.
\end{equation}
The ground state in particular exhibits form 
$$
|\Psi_0\rangle =\exp(\mathcal{X})|vac\rangle = \exp(-\mathcal{P})|vac\rangle.
$$

The fact that $\mathcal{H}_\mathrm{LHY}$ is diagonal in the operators $\{b_k,b^\ast_k\}$ leads one to attribute physical significance to them as \textit{quasiparticles}. As far as the computation of observables goes, one may use the operator basis $\{b_k,b_k^\ast\}$ just as easily as the momentum operators $\{a_k,a_k^\ast\}$.

But expressions in terms of the $b$ operators do not reveal the pair-excitation structure of \textit{all} excited states. Compare this to the schematic \eqref{essential eigenstate} which holds for all many-body excited states.

Additionally, starting from a state expressed by \eqref{quasiparticle} and attempting to expand this state in the momentum basis becomes unwieldy. The presence of both the annihilation operator $a_k$ as well as the creation operator $a_{-k}^\ast$ in the definition of $b_k^\ast$ means that, for example, the state $$|\Psi_{\{n_k,n_{-k}\}}\rangle:=(b_k^\ast)^{n_k} (b^\ast_{-k})^{n_{-k}}|\Psi_0\rangle,\quad n_k,n_{-k}\in\mathbb{N}$$ 
will require many commutations in order to derive a formula containing only products of the $a^\ast_k$ operators. For a state with $n_k=n=n_{- k}$, we will have to perform at least $2^n$ commutations in order to find such an expression. Our method, in contrast, provides a formula for excited states of this system explicitly in terms of momentum pairs by accounting for the non-perturbative effect of pair excitation via the operator $\exp(\mathcal{P})$.

In pursuing a rigorous analysis of the Lee-Huang-Yang model, we do not address the validity of quadratic models for BEC in this work. See \color{black}\cite{Seiringer}~\color{black} for a detailed discussion of these matters in the periodic setting. As such, we rely on the ubiquity of these approximations in many areas of physics for our motivation.


We choose the particular model of Lee, Huang and Yang as the subject of the current work because in our opinion it captures the essential physics of pair excitation in the most concise fashion. In this vein, the Lee-Huang-Yang Hamiltonian describes low-energy two-particle collisions via a delta-function interaction potential and a scalar \textit{scattering length}. We note that this introduces a familiar blow-up to the energy spectrum, which will be accounted for in a standard way. Our construction in fact holds for a variety of quadratic Hamiltonians; for example, Bogoliubov adopted the following quadratic Hamiltonian for bosons in the periodic box with a two-body interaction potential $v$ \color{black}\cite{Bogoliubov, Lieb}\color{black}
\begin{equation}\label{Bog}
\mathcal{H}_\mathrm{Bog} := \sum_{k}{(k^2+\frac{N}{2V}\hat{v}(k))a^\ast_k a_k}+\frac{N}{2V}\sum_{k}{\hat{v}(k)(a^\ast_k a^\ast_{-k} + a_k a_{-k}}).
\end{equation}

\subsubsection*{Outline of Paper}
In section \eqref{sec:LY-H}, we introduce the many-body Hamiltonian and describe the heuristic approximation scheme by which the Lee-Huang-Yang Hamiltonian, $\mathcal{H}_\mathrm{LHY}$, is derived. 

In section \eqref{sec:construction} we introduce the pair excitation transformation, $\exp(\mathcal{P})$, and provide an exact formula for excited states of the non-unitary transformed Hamiltonian via expansion in the momentum basis. 

In sections \eqref{sec:non-uni-family} and \eqref{completeness} we discuss the spectral theory for the transformed operator $$\exp(-\mathcal{P})\mathcal{H}_\mathrm{LHY}\exp(\mathcal{P}),$$ whose spectrum is \emph{not} identical to the spectrum of $\mathcal{H}_\mathrm{LHY}$. This fact necessitates that we take care in identifying the mappings of LHY eigenstates among the eigenstates of the transformed problem; they are exactly those eigenstates of the transformed system which are finite superpositions of momentum states.

In section \eqref{PC Ham}, the formula for eigenstates is adapted to a related particle-conserving Hamiltonian that appears in the course of the approximation scheme for $\mathcal{H}_\mathrm{LHY}$. We associate this approximation with Wu, as it is directly implied by his work \color{black}\cite{Wu61,Wu98}\color{black}.

We conclude by discussing possible extensions of this model to the non-translation invariant setting.

\subsection*{Notation and preliminary results}
\label{subsec:notation}

Our domain will be the periodic box in 3 dimensions with length $L$, which we denote $B_L=[0,L]^3$, and its volume $|B_L|=L^3$. Unless otherwise noted, all integrals are assumed to be over $B_L$.

Function spaces on $\mathbb{R}^{3}$ or $B_L$ are denoted by lowercase gothic letters, viz., 
\begin{equation*}
    \mathfrak{h}(\mathbb{R}^3) := L^2(\mathbb{R}^3)~.
\end{equation*}
An exception is the definition  $\phi^{\perp}:=
\big\{e\in \mathfrak{h}\ \big\vert\ e\perp\phi\big\}$ where $\phi\in \mathfrak{h}$ is the condensate wave function. When considering function spaces for many-particle systems, we will take 
$$\mathfrak{h}(\mathbb{R}^{3N}):=L_\mathrm{sym}^2(\mathbb{R}^{3N}),$$ where the subscript refers to the fact that the functions in this space are symmetric under permutations of the $N$ coordinates, $x_1,x_2,\dots,x_N\in\mathbb{R}^3$. Operators on $\mathfrak{h}$ are given by greek or roman letters. For example, $\delta(x,y)$ will denote the Dirac mass of the identity operator and $\Delta$ is the Laplace operator. The many-body Hamiltonian on the configuration space $\mathfrak{h}(\mathbb{R}^{3N})$ is denoted $H_N$ which we define in section \eqref{sec:LY-H}.

 
For $f, g \in \mathfrak{h}$
the tensor-product corresponding to $f(x)\overline{g(y)}$ is expressed by $f\otimes g$. The symmetrized tensor product of $f,g$ is 
\begin{equation*}
f\otimes_{\mathrm{s}}g:= \frac{1}{\sqrt{2}}\big\{f\otimes g +g\otimes f\big\}~.
\end{equation*}
\subsubsection*{The bosonic Fock space}
We define the Bosonic Fock space, $\mathbb{F}$, as a direct sum of $n$-particle Hilbert spaces, via
\begin{equation*}
\FF:=\bigoplus_{n=0}^{\infty}\mathbb{F}_{n}~;\quad \FF_0:=\mathbb{C}~,\quad \FF_n:=\mathfrak{h}(\mathbb{R}^{3n})\quad \mbox{if}\ n\ge 1~. 
\end{equation*}
Vectors in $\FF$ are described as sequences of $n$-particle wave functions, or using ket notation, as in  $\vert u\rangle=\{u^n\}$ where $u^n\in \mathfrak{h}(\mathbb{R}^{3n})$, for $n\ge0$. The inner product of $\vert u\rangle=\{u^n\},\, \vert w\rangle=\{w^n\}\in\FF$ is 
\begin{equation*}
\langle u,w\rangle_{\FF}:=\sum_{n=0}^\infty{\langle \overline{u}^n,w^n\rangle_{L^2(\mathbb{R}^{3n})}}~,
\end{equation*}
This induces the norm $\Vert |u\rangle\Vert_\mathbb{F}=\sqrt{\langle u,u\rangle_{\FF}}$.  
As we already aluded to, we employ the bra-ket notation for Schr\"odinger state vectors in $\FF$ to distinguish them from wave functions in $\mathfrak{h}(\mathbb{R}^{3n})$. 

Operators on $\mathbb{F}$ will be denoted by calligraphic letter, an example being the Hamiltonian $\mathcal{H}:\mathbb{F}\to\mathbb{F}$. This, of course, excludes the annihilation and creation operators, including the field operators $a_x,\,a^\ast_x,$ as well as the operators $a_{\overline \phi}, a^*_\phi$  and $a_k, a_k^{\ast}$ associated with the basis $\{e_k(x)\}$ discussed shortly. The vacuum state in $\FF$ is $\vert  vac\rangle:={\{1,0,0,\dots\}}$, where the unity is placed in the zeroth slot. A symmetric $N$-particle wave function, $\psi_N\in \mathfrak{h}(\mathbb{R}^{3N})$, has a natural embedding into $\FF$ given by $\vert\psi\rangle_N=\{0,0,\dots,\psi_N(x),0,\dots\}$, where $\psi_N(x)$ is in the $N$-th slot. The set of all vectors $\vert \psi\rangle_N$ for $N$ fixed is a linear subspace of $\mathbb{F}$, denoted $\mathbb{F}_N$, and is called the `$N$-th fiber' ($N$-particle sector) of $\FF$. We sometimes omit the subscript `$N$' when referring to a $\vert\psi\rangle_N\in\mathbb{F}_N$ when the context makes it clear . 

A Hamiltonian on $\mathfrak{h}(\mathbb{R}^{3N})$ admits an extension to an operator on $\FF$. This extension is carried out via the Bosonic field operator $a_x$ and its adjoint, $a_x^\ast$, for spatial coordinate $x\in\mathbb{R}^3$. To define these field operators, first consider the annihilation and creation operators for a one-particle state $f\in \mathfrak{h}$, denoted by $a(\overline{f})$ and $a^\ast(f)$. These operators act on $\vert u\rangle=\{u^n\}\in\FF$ according to 
\begin{equation*}
\big(a(\overline{f})\vert u\rangle\big)^n:=\sqrt{n+1}\int{\dv x\, \overline{f(x)}\,u^{n+1}(x,x_2,\dots,x_n)}~,
\end{equation*}
\begin{equation*}
\big(a^\ast(f)\vert u\rangle\big)^n:=\frac{1}{\sqrt{n}}\sum_{j\le n}{f(x_j)\,u^{n-1}(x_1,\dots,x_{j-1},x_{j+1},\dots,x_n)}~.
\end{equation*}
We often use the symbols $a_{\overline f}:=a(\overline f)$ and $a^*_f:= a^*(f)$. Also, given an orthonormal basis, $\{e_j(x)\}_j\subset \mathfrak{h}$, we will write $a_j^\ast$ in place of $a^\ast (e_j)$ and $a_j$ in place of $a(\overline{e_j})$.  

The Boson field operators $a_x^\ast,\, a_x$ are now defined using an orthonormal basis via 
\begin{equation*}
a^\ast_x =\sum_j{e_j(x)\,a^\ast_j}~,\quad 
a_x = \sum_j{\overline{e_j(x)}\,a_j}~.
\end{equation*}
The canonical commutation relations 
$[a_x,a_y^\ast] =\delta(x-y)$, $[a_x,a_y] = 0$ then follow.

An orthonormal basis that we use extensively in this work consists of the  momentum eigenfunctions on $B_L$:
\begin{equation}
e_k(x) := \frac{e^{ik\cdot x}}{\sqrt{|B_L|}},\quad\mathrm{where}\quad k:= \frac{2\pi n}{L},\quad n\in\mathbb{Z}^3.
\end{equation}
Thus we consider periodic functions (of spatial variable $x\in B_L$) with period $L$ and denote the dual lattice $$\mathbb{Z}^3_L:= \{k=2\pi n/L \,\big|\,n\in\mathbb{Z}^3\}.$$ By virtue of the commutation relations on $\{a_x, a_x^\ast\}$, the creation and annihilation operators for the states $\{e_k(x)\}_{k\in\mathbb{Z}^3_L}$, denoted $a(\overline{e_k}):=a_k$ and $a^\ast(e_k):=a^\ast_k$ 
satisfy
$$[a_{k_1},a^\ast_{k_2}]=\delta(k_1,k_2),\quad [a_{k_1},a_{k_2}]=0=[a^\ast_{k_1},a^\ast_{k_2}],\quad k_1,k_2\in\mathbb{Z}^3_L.$$

Vectors  $|\Psi\rangle\in\mathbb{F}$ can also be expressed in terms of the occupation number basis for $\{e_k(x)\}$. The orthonormal elements of this Fock space basis consist of tensor product states for every collection of integers $\{n_k\in\mathbb{N}\}_{k\in\mathbb{Z}^3_L}$ with $\sum_{k}{n_k}<\infty$, given by 
$$|n_{k_1},n_{k_2},\dots\rangle := \prod_{k\in\mathbb{Z}^+_3}\frac{\big(a(e_{k})^\ast\big)^{n_{k}}}{\sqrt{n_{k}!}}|vac\rangle.$$
Fixing a finite collection of momenta, $(k_1,\dots,k_n)$, we will also consider the states $|n_{k_1},n_{k_2},\dots,n_{k_n}\rangle$ which are defined in a similar manner.

Finally, the symbol ``$\approx$" will be used in two senses. The first sense refers the the heuristic approximation of Fock space operators, as in $\mathcal{H}_\mathrm{LHY}\approx \mathcal{H}$. The second sense is the precise relation of asymptotic equivalence of functions. A function $f(z)$ is said to be asymptotically equivalent to $g(z)$ as $|z|\to\infty$, i.e., $f(z)\approx g(z)$ as $|z|\to\infty$, provided $\lim_{|z|\to\infty}\frac{f(z)}{g(z)}=1.$ We will be careful to clarify in which sense we are using this symbol when it occurs in the text.

\textbf{On the operator exponential:} We will make extensive use of the operator $\exp(\mathcal{P}):\mathbb{F}\to \mathbb{F}$, where
$$\mathcal{P} := \sum_{k\in\mathbb{Z}^+_L}{-\alpha(k) a_k^*a_{-k}^*},\quad\mathrm{for}\quad 0\le \alpha(k)<1\quad\forall k\in\mathbb{Z}^+_L.$$
The operator $\mathcal{P}$ is unbounded on $\mathbb{F}$. While it does not have any eigenvalues, its spectrum $\sigma(\mathcal{P})$ (that is, the union of point, continuous and residual spectra) consists of the entire complex plane. A precise definition of $\mathrm{dom}(\exp(\mathcal{P}))$ is not strictly necessary for our purposes; it will suffice to consider this operator as a formal series in powers of $\mathcal{P}$ as long as we are acting on states which remain finite-norm under the action of this formal operator series. Fixing $k\in\mathbb{Z}^+_L$, a state vector of the form
$$|\Psi(k)\rangle = \sum_{n=0}^\infty{c_n |n,n\rangle},\quad|n,n\rangle := \frac{(a_k^\ast a_{-k}^\ast)^n}{n!}|vac\rangle, \quad\sum_{n=0}^\infty{|c_n|^2}=1$$
will belong to $\mathrm{dom}\big(\exp(\mathcal{P})\big)$ provided that the sequence 
\begin{equation}
\Big\{\tilde{c}_s := \sum_{n=0}^{s}{c_n(-\alpha(k))^{s-n}\binom{s}{s-n}}\Big\}_{s=0}^\infty
\end{equation}
is square-summable. We note that the use of exponentials involving creation operators such as $\exp(\mathcal{P})$ is quite common the study of generalized coherent states \color{black}\cite{Perelomov, Klauder}\color{black}.

The following lemma will be crucial.
\begin{lemma}\label{lemma1}
Any two operators $\mathcal{A},\mathcal{B}$ in the Fock space satisfy the identity
$$
e^\mathcal{A}\mathcal{B}e^{-\mathcal{A}} = e^{ad(\mathcal{A})}\mathcal{B} = \sum_{n=0}^\infty{\frac{ad(\mathcal{A})^n(\mathcal{B})}{n!}},
$$
where $ad^0(\mathcal{A})(\mathcal{B}) = \mathcal{B}$ and 
$$
ad(\mathcal{A})^n(\mathcal{B}) = [\mathcal{A},ad(\mathcal{A})^{n-1}(\mathcal{B})],\quad n\ge 1.\quad\Box
$$
\end{lemma}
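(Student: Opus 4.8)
The plan is to prove this standard Hadamard-type identity by the interpolation-in-a-parameter argument, treating it as a formal power-series relation in the spirit already adopted for $\exp(\mathcal{P})$. First I would introduce the one-parameter family
$$F(t) := e^{t\mathcal{A}}\,\mathcal{B}\,e^{-t\mathcal{A}},\quad t\in\mathbb{R},$$
regarded as a series in $t$ whose coefficients are operators built from products of $\mathcal{A}$ and $\mathcal{B}$. Differentiating term-by-term gives
$$F'(t) = \mathcal{A}\,e^{t\mathcal{A}}\mathcal{B}e^{-t\mathcal{A}} - e^{t\mathcal{A}}\mathcal{B}e^{-t\mathcal{A}}\,\mathcal{A} = [\mathcal{A},F(t)] = ad(\mathcal{A})\big(F(t)\big),$$
with initial datum $F(0)=\mathcal{B}$. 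The content of the lemma is that the unique solution of this linear operator equation is $F(t)=e^{t\,ad(\mathcal{A})}\mathcal{B}$, so that evaluating at $t=1$ yields the claim.

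To pin this down at the level of formal series I would compute the Taylor coefficients of $F$ at $t=0$ directly. An induction shows $F^{(n)}(0)=ad(\mathcal{A})^n(\mathcal{B})$: the base case $n=0$ is the initial condition, and differentiating the relation $F'=ad(\mathcal{A})(F)$ a further $n-1$ times and evaluating at $t=0$ gives $F^{(n)}(0)=ad(\mathcal{A})\big(F^{(n-1)}(0)\big)=ad(\mathcal{A})^n(\mathcal{B})$ by the inductive hypothesis. Substituting into $F(t)=\sum_{n\ge0}F^{(n)}(0)\,t^n/n!$ and setting $t=1$ produces the asserted expansion $e^{\mathcal{A}}\mathcal{B}e^{-\mathcal{A}} = \sum_{n\ge0} ad(\mathcal{A})^n(\mathcal{B})/n!$.

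An equivalent purely combinatorial route avoids the parameter $t$: one first proves by induction the commutator identity $\mathcal{A}^n\mathcal{B}=\sum_{j=0}^n\binom{n}{j}\,ad(\mathcal{A})^j(\mathcal{B})\,\mathcal{A}^{n-j}$ (the inductive step uses $\mathcal{A}\,ad(\mathcal{A})^j(\mathcal{B})=ad(\mathcal{A})^{j+1}(\mathcal{B})+ad(\mathcal{A})^j(\mathcal{B})\,\mathcal{A}$ together with Pascal's rule). Forming $e^{\mathcal{A}}\mathcal{B}=\sum_n \mathcal{A}^n\mathcal{B}/n!$, multiplying on the right by $e^{-\mathcal{A}}=\sum_q(-\mathcal{A})^q/q!$, and reindexing the resulting double sum as a Cauchy product collapses all powers of $\mathcal{A}$ via $\sum_{i+q=m}(-1)^q/(i!\,q!)=\delta_{m,0}$, leaving exactly $\sum_{j}ad(\mathcal{A})^j(\mathcal{B})/j!$.

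The main obstacle is not the algebra but the analysis: since $\mathcal{A}$ (in our application $\mathcal{P}$) is unbounded on $\mathbb{F}$, the series $e^{\mathcal{A}}$, $e^{-\mathcal{A}}$ and $e^{ad(\mathcal{A})}$ need not converge in operator norm, so the term-by-term differentiation and the Cauchy-product reindexing require justification. Consistent with the convention already fixed for $\exp(\mathcal{P})$, I would read the identity as one of formal power series, or equivalently establish it on the dense subspace of vectors of finite particle number, on which $\mathcal{A}$ acts block-diagonally with each series terminating or converging absolutely. On such a core all the manipulations above are legitimate, and the identity then extends to any $|\Psi\rangle\in\mathrm{dom}(\exp(\mathcal{P}))$ for which both sides have finite norm, which is all the subsequent applications demand.
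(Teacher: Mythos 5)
Your proof is correct, but there is nothing in the paper to compare it against: the paper states Lemma \eqref{lemma1} with a terminal $\Box$ and supplies no argument at all, quoting the Hadamard adjoint-expansion identity as a standard fact. Both of your routes are sound --- the interpolation argument via $F(t)=e^{t\mathcal{A}}\mathcal{B}e^{-t\mathcal{A}}$, $F'=[\mathcal{A},F]$, $F^{(n)}(0)=ad(\mathcal{A})^n(\mathcal{B})$, and the combinatorial one via $\mathcal{A}^n\mathcal{B}=\sum_{j=0}^n\binom{n}{j}ad(\mathcal{A})^j(\mathcal{B})\mathcal{A}^{n-j}$ with the Cauchy-product collapse $\sum_{i+q=m}(-1)^q/(i!\,q!)=\delta_{m,0}$ --- and your closing discussion of unboundedness is exactly the right concern, consistent with the paper's stated convention of treating $\exp(\mathcal{P})$ as a formal series on states where everything converges. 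One point you could have added, which is the real reason the formal identity suffices everywhere it is invoked in this paper: in the actual applications $\mathcal{B}$ is a single ladder operator and $\mathcal{A}=\pm\mathcal{P}$ is quadratic in creation operators, so $ad(\mathcal{P})(a_k)=-\alpha(k)a_{-k}^{\ast}$ is again a creation operator and $ad(\mathcal{P})^n(a_k)=0$ for $n\ge 2$ (likewise $ad(\mathcal{P})^n(a_k^{\ast})=0$ for $n\ge 1$); the series in the lemma therefore terminates after finitely many terms, yielding the conjugation formulas \eqref{conjugation0} with no convergence question at all, and the conjugation of $\mathcal{H}_\mathrm{LHY}$ then follows by conjugating each ladder-operator factor separately. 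This termination observation would turn your ``formal power series'' caveat into a complete justification for every use the paper makes of the lemma.
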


\section{Many-body Hamiltonian and approximation scheme}\label{sec:LY-H}

We now summarize the bosonic many-body Hamiltonian and discuss the quadratic approximation of Lee, Huang, and Yang. This is included as motivation for the analysis that follows; we do not rigorously justify the derivation of this section.

Consider $N$ identical bosons inside the box $B_L$, with periodic boundary conditions and repulsive pairwise particle interaction $\upsilon$. On the Hilbert space $\mathfrak{h}(\mathbb{R}^{3N})$ of symmetric $N-$particle wavefunctions, the Hamiltonian for this system reads 
\begin{equation}
H_N = \sum_{j=1}^N{-\Delta_j}+\sum_{i<j}^N{\upsilon(x_i,x_j)},\quad x_j\in\mathbb{R}^3.
\end{equation}
Here we choose units such that $\hbar=2m=1$, where $\hbar$ is Planck's constant, and $m$ is the atomic mass. The interaction potential $\upsilon$ should be understood to be positive, symmetric, and compactly supported. 

This Hamiltonian can be lifted to the bosonic Fock space via the field operators $\{a_x, a_x^\ast\}$:
\begin{equation}\label{FockHamiltonian}
\mathcal{H} = \int{dx\big\{a_x^\ast (-\Delta_x) a_x\big\}}+ \frac{1}{2}\iint{dxdy\big\{\upsilon(x,y)a_x^\ast a_y^\ast a_x a_y\big\}}.
\end{equation}

In the spirit of Lee, Huang and Yang, we take the interaction $\upsilon$ to be the Fermi pseudopotential, which is an effective operator that reproduces the low-energy limit of the far field of the exact wavefunction \color{black}\cite{Huang Yang Luttinger}\color{black}.  If $f$ is any two-body wavefunction, we therefore take 
\begin{equation}\label{pseudopotential}
\upsilon(x_i,x_j)f(x_i,x_j)\approx g\delta(x_i-x_j)\frac{\partial}{\partial x_{ij}}\big[x_{ij}f(x_i,x_j)\big],\quad (i\not=j),
\end{equation}
where $x_{ij}:=|x_i-x_j|$, and $g := 8\pi a$ where $a$ is the scattering length. We further simplify this interaction by omitting $(\partial/\partial x_{ij})x_{ij}$ from \eqref{pseudopotential}, viz., 
\begin{equation}\label{delta}
\upsilon(x_i,x_j)\mapsto g \delta(x_i-x_j).
\end{equation}
The substitution \eqref{delta} will be exact for solutions to the many-body wavefunction with sufficient regularity \color{black}\cite{Huang Yang Luttinger}\color{black}. Inserting \eqref{delta} into \eqref{FockHamiltonian}, and expanding in the momentum basis yields
\begin{equation}\label{Ham1}
 \mathcal{H} \approx \sum_{k\in\mathbb{Z}^3_L}{|k|^2 a^\ast_k a_k}+ \frac{4\pi a}{|B_L|} \sum_{k_1+k_2=k_3+k_4 }{a^\ast_{k_1}a^\ast_{k_2}a_{k_3}a_{k_4}}.
\end{equation}
The symbol ``$\approx$" denotes the fact that we have approximated the two-body interaction potential with a delta function. 

A remark is in order. Namely, $\mathcal{H}$ as written in \eqref{Ham1} is manifestly particle conserving, and so its associated eigenvalue problem may be considered on the Fock space of fixed particle number, $\mathbb{F}_N$ for $N<\infty$. By contrast, the effective Hamiltonian of Lee Huang and Yang (which we denote by $\mathcal{H}_{\mathrm{LHY}}$ and define shortly), will not conserve the total number of particles. We will discuss the implications of this for the model of low-lying excitations after summarizing the remaining steps in the approximation of $\mathcal{H}_\mathrm{LHY}$. We proceed by decomposing the interaction part of $\mathcal{H}$ into terms containing like-powers of the condensate operators, $a_0$ and $a^\ast_0$~; the second, third, and fourth lines of the expression below contain quadratic, linear, and zeroth-order terms in these operators respectively:
\begin{equation}\label{approx2}
\begin{split}
\mathcal{H} &= \sum_{k\in\mathbb{Z}^3_L, k\not=0}{k^2 a_k^\ast a_k} + \frac{4\pi a }{|B_L|}(a_0^\ast)^2 a_0^2 \\
&+\frac{4\pi a}{|B_L|}\sum_{k\in\mathbb{Z}^3_L, k\not=0}{\Big((a_0^\ast)^2 a_k a_{-k} + a_k^\ast a_{-k}^\ast (a_0)^2 + 4 (a_0^\ast a_0) a_k^\ast a_k\Big)} \\
&+ \frac{8\pi a}{|B_L|}\sum\limits_{\substack{k_3=k_1+k_2\\ k_1,k_2,k_3\not=0 }}{\Big((a^\ast_{k_1}a^\ast_{k_2}a_{k_3})a_{0}+(a_{k_1}a_{k_2}a_{k_3}^\ast) a_0^\ast\Big)}\\
&+\frac{4\pi a}{|B_L|}\sum\limits_{\substack{k_1+k_2=k_3+k_4 \\k_1,k_2,k_3,k_4\not=0}}{a^\ast_{k_1}a^\ast_{k_2}a_{k_3}a_{k_4}}.
\end{split}
\end{equation}

The approximation of Lee Huang and Yang is consistent with the following steps: \textbf{(i)} replace the condensate occupation number operator by
$$ (a_0^\ast a_0) \mapsto N-\sum_{k\not=0}{a_k^\ast a_k},\quad\mathrm{for}\quad N<\infty,$$
and \textbf{(ii)} take the formal limit $N,|B_L|\to \infty$ with $\rho := N/|B_L|$ fixed.
In particular, the first line in \eqref{approx2} is approximated by
\begin{equation} \label{approx1}
\begin{split}
\frac{4\pi a}{|B_L|} (a_0^\ast)^2 a_0^2 &= \frac{4\pi a}{|B_L|}\big\{(a_0^\ast a_0)^2 - (a_0^\ast a_0)\big\} \\
&\approx \frac{4\pi a}{|B_L|}\Big\{(N-\sum_{k\not=0}{a^\ast_k a_k})^2-(N-\sum_{k\not=0}{a^\ast_k a_k})\Big\} \\
&\approx \frac{4\pi a}{|B_L|}\Big\{N(N-1)-2N\sum_{k\not=0}{a_k^\ast a_k}\Big\} \\
&\approx 4\pi a\rho N - 8\pi a\rho \Big\{\sum_{k\not=0}{a_k^\ast a_k}\Big\}.
\end{split}
\end{equation}
The third line of \eqref{approx1} makes use of step \textbf{(ii)} to drop the terms $\frac{4\pi a}{|B_L|}(\sum_{k}{a_k^\ast a_k})^2$ and $\frac{4\pi a}{|B_L|}\sum_{k\not=0}({a_k^\ast a_k})$ since these \textit{formally} vanish in the limit $|B_L|,N\to \infty$.

The second line of \eqref{approx2} consists of three quadratic terms in condensate operators, proportional to $(a_0)^2$, $(a_0^\ast)^2$ and $(a_0^\ast a_0)$ respectively. The $(a_0^\ast a_0)$ term will contribute to the diagonal part of $\mathcal{H}_\mathrm{LHY}$, via

\begin{equation}\label{approx3}\begin{split}
\frac{16\pi a}{|B_L|}\sum_{k\not=0}{(a_0^\ast a_0)a_k^\ast a_k} &\approx 16\pi a \rho \sum_{k\not=0}{a_k^\ast a_k} - \frac{16\pi a}{|B_L|}\big(\sum_{k\not=0}{a_k^\ast a_k}\big)^2 \\
&\approx 16\pi a\rho \sum_{k\not=0}{a_k^\ast a_k}.
\end{split}\end{equation}

If cubic and quartic terms in the $\{a^{(\ast)}_k\}$ operators where $k\not=0$ are now neglected (we will provide a consistent explanation for this in the next paragraph), approximations \eqref{approx1} and \eqref{approx3} yield the reduced Hamiltonian
\begin{equation}\label{H particle conserving}
\mathcal{H}\approx 4\pi a \rho N +\sum_{k\in\mathbb{Z}^3_L, k\not=0}{\big\{k^2 +8\pi a\rho \big\}a_k^\ast a_k} + \frac{4\pi a}{|B_L|}\sum_{k\in\mathbb{Z}^3_L, k\not=0}{\big\{(a_0^\ast)^2 a_k a_{-k} + a_k^\ast a_{-k}^\ast (a_0)^2\big\}}.
\end{equation}
This intermediate approximation to $\mathcal{H}_\mathrm{LHY}$ is the subject of section \eqref{PC Ham}. Since \eqref{H particle conserving} is particle conserving, the analysis of its spectrum is simpler than that of $\mathcal{H}_\mathrm{LHY}$. In particular, the non-Hermitian transform of \eqref{H particle conserving} will be bounded on the Fock space of fixed particle number, $\mathbb{F}_N$. 

It is now apparent that the only way for off-diagonal terms of \eqref{H particle conserving} to (formally) contribute in the limit described, that is, as $N,|B_L|\to\infty$ with $\rho = N/|B_L|$ fixed, requires that $(a_0)^2$ and $(a_0^\ast)^2$ be replaced (or be replaceable) by $N$. This also justifies the dropping of cubic-and-quartic terms in the $\{a^{(\ast)}_k\}$ operators between \eqref{approx2} and \eqref{H particle conserving}, since replacing $a_0$, $a_0^\ast$ by $\sqrt{N}$ and taking the above limit will result in these terms formally vanishing. We note that this amounts to a version of the famous Bogoliubov approximation \color{black}\cite{Bogoliubov}\color{black}, although such terminology was not used explicitly by Lee, Huang and Yang. With this final replacement, we arrive at the Lee-Huang-Yang (LHY) Hamiltonian
\begin{equation}
\mathcal{H}_{\mathrm{LHY}} := 4\pi a \rho N + \sum_{k\in\mathbb{Z}^3_L,\, k\not= 0}{\big\{k^2+8\pi a \rho\big\}a^\ast_k a_k} + 4\pi a \rho\sum_{k\in\mathbb{Z}^3_L,\, k\not= 0}{\big\{a^\ast_k a^\ast_{-k}+a_k a_{-k}\big\}}.
\end{equation}
The coupling between equal-and-opposite momenta present in $\mathcal{H}_\mathrm{LHY}$ makes it convenient to introduce the \textit{momentum half space}: $$\mathbb{Z}^+_L:=\{k= 2\pi n/L\,\big|\,n\in\mathbb{Z}^3,\,n_3>0\},$$ 
which allows us to write 
\begin{equation}\label{Heff}
\mathcal{H}_{\mathrm{LHY}} =  4\pi a \rho N + \sum_{k\in\mathbb{Z}^+_L}{2\big(k^2 + 8\pi a \rho\big)\Big\{\frac{1}{2}\big(a^\ast_k a_k + a^\ast_{-k}a_{-k}\big) + \frac{4\pi a \rho}{(k^2+8\pi a\rho)}\big(a^\ast_k a^\ast_{-k}+a_k a_{-k}\big)\Big\}}. \\
\end{equation}
Consistent with the work \color{black} \cite{Lee Huang Yang I}\color{black}, we define the constant 
$$y(k) := \frac{4\pi a\rho}{k^2+8\pi a\rho},\quad k\in\mathbb{Z}^+_L.$$ 

The point spectrum of $\mathcal{H}_\mathrm{LHY}$ will be called the \textit{Bogoliubov spectrum}, and is given by $\sum_{k\in\mathbb{Z}^+_{L}}{(n_k+n_{-k})\epsilon_k}$ where the set of occupation numbers $\{n_k\}_{k\in\mathbb{Z}_L}$ satisfies $$\sum_{k\in\mathbb{Z}^+_{L}}{(n_k+n_{-k})}<\infty$$ and the single particle energies are given by
\begin{equation}\label{BogSpec}
\epsilon_k := k\sqrt{k^2+16\pi a\rho}.
\end{equation}

In the analysis of the next section, it will be easier to state results for the generic operator that appears inside the brackets in \eqref{Heff}, where $y(k)$ may take on the range of values $(0,1/2)$ as $k$ varies over $\mathbb{Z}^+_L$. We therefore conclude this section by defining the quadratic Hamiltonian for two particle species $\mathcal{H}_\mathrm{ab}(y)$.
\begin{definition}
Let the operators $\{a^{(\ast)},b^{(\ast)}\}$ satisfy the commutation relations of any $\{a_k^{(\ast)},\, a_{-k}^{(\ast)}\}$ for $k\in\mathbb{Z}_L^+$, and act on the Fock space $\mathbb{F}_{ab}$ which is the linear span of all formal tensor products of the operators $a^\ast, b^\ast$, i.e., 
\begin{equation}
\mathbb{F}_{ab} := \mathrm{span}\{|n_a,n_b\rangle, n_a,n_b\in\mathbb{N}\},\quad\mathrm{for}\quad |n_a,n_b\rangle := \frac{(a^\ast)^{n_a}(b^\ast)^{n_b}}{\sqrt{n_a! n_b!}}|vac\rangle.
\end{equation}
Define the Hamiltonian $\mathcal{H}_{ab}(y)$ on $\mathbb{F}_{ab}$, where $0<y<1/2$, by 
\begin{equation}
\mathcal{H}_\mathrm{ab}(y) := \frac{1}{2}\big(a^\ast a+b^\ast b\big)+ y\big(a^\ast b^\ast + a b\big).
\end{equation}
We then define the Bogoliubov spectrum for $\mathcal{H}_\mathrm{ab}$ by the formula 
\begin{equation}\label{BogSpec2}
(n_a + n_b)\sqrt{1-4y^2}, \quad\mathrm{for}\quad n_a+n_b<\infty.
\end{equation}
\end{definition}
\subsection*{Comment on particle non-conservation} 
There is an apparent contradiction in the approximation scheme above, since step \textbf{(i)} is a statement of particle-conservation, while we ultimately arrive at $\mathcal{H}_\mathrm{LHY}$ which does not conserve the number of particles. We must therefore be more precise, and specify that the steps \textbf{(i)} and \textbf{(ii)} do not refer to the restriction of the problem to the $N$-particle sector of Fock space, $\mathbb{F}_N$, but rather impose a constraint on the average number of particles for the many-body states of $\mathcal{H}_\mathrm{LHY}$. This means that we solve the eigenvalue problem  $\mathcal{H}_\mathrm{LHY}|\Psi\rangle = E|\Psi\rangle$ for eigenstates $|\Psi\rangle\in\mathbb{F}$ which satisfy the condition 
\begin{equation}\label{avgParticles}
\langle \Psi | a_0^\ast a_0 + \sum_{k}{a_k^\ast a_k} |\Psi\rangle = N.
\end{equation}
We will have to verify \eqref{avgParticles} \textit{a posteriori} as a self-consistent assumption. 
\section{Formal construction of many-body excited states}\label{sec:construction}
We now construct a family of \textit{formal} solutions to the eigenvalue problem for Hamiltonian $\mathcal{H}_\mathrm{LHY}$. This construction makes use of the non-unitary transformation by $\exp(\mathcal{P}):\mathbb{F}\to\mathbb{F}$, which creates pairs of particles with opposite-momenta in every fiber $\mathbb{F}_n$ of $\mathbb{F}$. In defining the operator $\mathcal{P}$, we are inspired by the operator $\mathcal{W}$ described by Wu (equation \eqref{Wuexp}) \color{black}\cite{Wu61, Wu98}\color{black}. The operator $\mathcal{P}$ depends on a free parameter, denoted $0<\alpha(k)<1$, for every $k\in\mathbb{Z}^+_L$, and is defined by

\begin{equation}\label{Poperator}
\mathcal{P} := \sum_{k\in\mathbb{Z}^+_L}{-\alpha(k)a^\ast_k a^\ast_{-k}}.
\end{equation}

We will choose $\alpha(k)$ in the manner of Wu in order to eliminate all terms proportional to $(a^\ast_k a^\ast_{-k})$ in $\exp(\mathcal{P})(\mathcal{H}_\mathrm{LHY})\exp(-\mathcal{P})$. The resulting non-Hermitian eigenvalue problem can be solved exactly by considering solutions $|\Psi_\mathrm{LHY}(k)\rangle$, for $k\in\mathbb{Z}^{+}_L$, which are linear combinations of tensor product states containing only the momenta $k$ and $-k$:
\begin{equation}\label{transformedHab}
\Big(\exp(-\mathcal{P})\mathcal{H}_\mathrm{LHY}\exp(\mathcal{P})\Big) |\Psi_\mathrm{LHY}(k)\rangle = E|\Psi_\mathrm{LHY}(k)\rangle.
\end{equation} 
Equation \eqref{transformedHab} will admit either a discrete or a continuous point spectrum depending on whether the quantity 
$$ \tilde{y}(k) := \frac{y(k)}{\sqrt{1-4y(k)^2}},\quad \mathrm{for}\quad y(k) = \frac{4\pi a\rho}{k^2 + 8\pi a \rho},$$
satisfies $\tilde{y}(k)<1$ or $\tilde{y}(k)\ge1$. This complication is a mathematical artifact of the transformation; it has no significance for the physics problem. In the next section, we will provide a characterization of the states $\exp(-\mathcal{P})|\Psi_\mathrm{LHY}(k)\rangle$, where $|\Psi_\mathrm{LHY}(k)\rangle$ is an eigenstate of $\mathcal{H}_\mathrm{LHY}$. In a sense that will be made clear, the case $\tilde{y}(k)<1$ corresponds to a regular perturbation of the eigvenvalue problem for the diagonal Hamiltonian $\mathcal{H}_\mathrm{diag}:=\frac{1}{2}(a_k^\ast a_{k}+a_{-k}^\ast a_{-k})$, while $\tilde{y}(k)\ge1$ constitutes a singular perturbation. We note that for $y(k)\in(0,1/2)$, the quantity $\tilde{y}(k)$ takes on values in the range $(0,\infty)$. 

\subsection*{The non-Hermitian transform}
For $\mathcal{P}$ given by \eqref{Poperator}, a straightforward calculation using Lemma \eqref{lemma1} yields the conjugations 
\begin{equation}\label{conjugation0}
\exp(-{\mathcal{P}})a_k \exp({\mathcal{P}}) = a_k - \alpha(k) a^\ast_{-k},\quad \exp(-{\mathcal{P}}) a_k^\ast \exp({\mathcal{P}}) = a_k^\ast,\quad \forall k\in\mathbb{Z}_L^3.
\end{equation}
We define $\alpha(-k)=\alpha(k)$ for $k\in\mathbb{Z}^+_L$. This yields the transformed Hamiltonian
\begin{equation}\label{eq:expTransform} \begin{split}
\exp(-{\mathcal{P}})\mathcal{H}_\mathrm{LHY}&\exp({\mathcal{P}}) = 4\pi a \rho N +4\pi a\rho \sum_{k\in\mathbb{Z}^3_L}{\alpha(k)} \\
&+\sum_{k\in\mathbb{Z}^3_L}{\big(k^2+8\pi a\rho - 8\pi a\rho \alpha(k)\big)(a_k^\ast a_k)} + 4\pi a\rho\sum_{k\in\mathbb{Z}_L^3}{(a_k a_{-k})} \\
&+\sum_{k\in\mathbb{Z}^3_L}{\big(4\pi a\rho (\alpha(k))^2 - (k^2 +8\pi a\rho)\alpha(k) +4\pi a\rho\big) (a_k^\ast a_{-k}^\ast)}.
\end{split}
\end{equation}
We note that at this point that the term $\sum_{k\in\mathbb{Z}^3_L}{\alpha(k)}$ introduces an infinite constant (see equation \eqref{alphak}) to the energy spectrum of the transformed Hamiltonian; this can be attributed to approximating the smooth interaction potential by a delta function and can be removed in a systematic way. We will keep the term as an infinte constant in the following analysis, and instead refer to \color{black}\cite{Huang Yang Luttinger}\color{black}~ for details regarding the renormalization procedure. In distinction to the unitary transformation of $\mathcal{H}_\mathrm{LHY}$ exemplified by equation \eqref{unitary transformation}, the exponential transformation by $\mathcal{P}$ has a free parameter $\alpha(k)$ for every $k\in\mathbb{Z}^+_L$ which is not constrained by a requirement of uniticity. We exploit this freedom in order to make the expansion of eigenvectors in the momentum basis particularly simple. The resulting problem \eqref{transformedHab} on Fock space will correspond to an upper triangular (infinite) matrix system. 

In this vein, we choose $\alpha(k)$ so that the last line of \eqref{eq:expTransform} vanishes, which implies 
\begin{equation}\label{alphak}
\alpha(k) = \frac{1}{8\pi a\rho}\Big((k^2 + 8\pi a\rho)\pm k\sqrt{k^2 + 16\pi a\rho}\Big),\quad k\in\mathbb{Z}^+_L.
\end{equation}
The choice of $\alpha(k)$ corresponding to the minus sign in \eqref{alphak} yields a positive spectrum for \eqref{transformedHab}. The two possible solutions for $\alpha(k)$ in the expansion of $\mathcal{P}$ has an analog in the non-periodic setting, where the quadratic equation for $\alpha(k)$ generalizes to an operator Riccati equation for a pair excitation kernel $k(x,y)$. In \color{black}\cite{Grillakis Margetis Sorokanich}\color{black}~, we provide a detailed description of this correspondence. With this choice, the transformed Hamiltonian reads


\begin{equation}\label{eq:expTransform2}\begin{split}
\exp(&-\mathcal{P})\mathcal{H}_\mathrm{LHY}\exp({\mathcal{P}}) = 4\pi a\rho N + 4\pi a\rho \sum_{k\in\mathbb{Z}^3_L}{\alpha(k)} \\ 
&+ \sum_{k\in\mathbb{Z}^+_L}{2\big(k\sqrt{k^2+16\pi a\rho}\big)\Big\{\frac{1}{2}(a^\ast_k a_k + a^\ast_{-k}a_{-k})+ \Big(\frac{4\pi a\rho}{k\sqrt{k^2+16\pi a\rho}}\Big) a_k a_{-k}\Big\}}.
\end{split}\end{equation}

For a single term of this sum corresponding to fixed $k\in\mathbb{Z}^+_L$, the operator inside brackets is a transformation of the formal Hamiltonian $\mathcal{H}_\mathrm{ab}(y):\mathbb{F}_\mathrm{ab}\to\mathbb{F}_\mathrm{ab}$ if we equate momentum $k$ with the particle species $a$ and $-k$ with particle species $b$. This is summarized in the following definition:

\begin{definition}\label{Def2} Denote the rescaled transformed Hamiltonian 
\begin{equation}\begin{split}
\mathcal{H}^{(\alpha)}_\mathrm{ab} &:= \frac{1}{1-2\alpha y}\Big(\exp(\alpha a^\ast b^\ast) \mathcal{H}_\mathrm{ab} \exp(-\alpha a^\ast b^\ast) + \alpha y\Big) \\
&= \frac{1}{2}\big(a^\ast a + b^\ast b\big) + y_1(\alpha) ab + y_2(\alpha) a^\ast b^\ast,
\end{split}
\end{equation}
where the parameters $y_1(\alpha),\,y_2(\alpha)$ are given by $$y_1(\alpha):=\frac{y}{1-2\alpha y},\quad \mathrm{and}\quad y_2(\alpha):=\frac{y-\alpha+\alpha^2y}{1-2\alpha y}.$$
The value $$\alpha_c := \frac{1-\sqrt{1-4y^2}}{2y},$$ then induces the transformation:
\begin{equation}\label{eq:WuHamiltonian1}
\mathcal{H}^{(\alpha_c)}_\mathrm{ab}=\frac{1}{2}\Big(a^*a+b^*b\Big)+\Big(\frac{y}{\sqrt{1-4y^2}}\Big)ab.
\end{equation}
If $y = y(k) = \frac{4\pi a\rho}{k^2 + 8\pi a\rho}$, the operator \eqref{eq:WuHamiltonian1} acts on $\mathbb{F}_\mathrm{ab}$ in  precisely the same way as the operator inside brackets in equation \eqref{eq:expTransform2} acts on the Fock subspace formed by tensor product states with momentum $k,-k$. 
\end{definition}


\subsection*{Constructing eigenstates}
\label{subsec:critical}

The formal eigenstates of $\mathcal{H}^{(\alpha_c)}_\mathrm{ab}$ are now described. First we explain the setup. By \textit{formal}, we mean that formula \eqref{formula1} does not specify whether the states described there have finite norm in the space $\mathbb{F}_\mathrm{ab}$. We nonetheless write formula \eqref{formula1}, abusing notation, since it neatly encompasses all of the cases involved in the spectral analysis of the family of operators $\mathcal{H}^{(\alpha_c)}_\mathrm{ab}(y)$. The states are written $|\Psi_{p,\Theta}\rangle\in\mathbb{F}_\mathrm{ab}$, for $p\in\mathbb{N}$ and $\Theta\in\mathbb{C}$, and satisfy the formal eigenvalue problem 
\begin{equation}\label{ab-eigenproblem}
\mathcal{H}^{(\alpha_c)}_\mathrm{ab}|\Psi_{p,\Theta}\rangle = \big(\frac{p}{2}+\Theta\big) |\Psi_{p,\Theta}\rangle.
\end{equation}
The index $p$ describes the fact that $|\Psi_{p,\Theta}\rangle$ is an element of the linear subspace of states $|\Psi\rangle\in\mathbb{F}_\mathrm{ab}$ such that $(a^\ast a - b^\ast b)|\Psi\rangle = p |\Psi\rangle$~ holds. In the occupation number basis, this subspace corresponds to the closed linear span
$$|\Psi_{p,\Theta}\rangle\in\mathrm{span}\big\{|p+s,s\rangle, s\in\mathbb{N}\big\}.$$
For every $p\in\mathbb{N}$, it is also possible to consider solutions which satisfy $(b^\ast b - a^\ast a)|\Psi\rangle = p|\Psi\rangle$, and construct eigenstates in the span of vectors $|s,s+p\rangle$. Thus every state $|\Psi_{p,\Theta}\rangle$ will be associated with a degenerate state denoted $|\Psi^{(-)}_{p,\Theta}\rangle$. Indeed, the only difference between $|\Psi_{p,\Theta}\rangle$ and $|\Psi_{p,\Theta}^{(-)}\rangle$ will be whether the expansion is formed from states $|s+p,s\rangle$ versus $|s,p+s\rangle$. We therefore write formulas explicitly for the first kind of eigenstate, implying that equivalent formulas can be derived for $|\Psi^{(-)}_{p,\Theta}\rangle$ by the appropriate substitution.

The complex value $\Theta$ enters into the expansion for the state $|\Psi_{p,\Theta}\rangle$ via the generalized binomial coefficients $\binom{\Theta}{s}$ for $s=0,1,2,\dots$. The (unnormalized) states $|\Psi_{p,\Theta}\rangle$ are now described:

\begin{formula}\label{formula1} Let $p\in\mathbb{N}$ and $\Theta\in \mathbb{C}$, and define $\tilde{y}$ by 
$$\tilde{y} := \frac{y}{\sqrt{1-4y^2}},\quad\mathrm{for}\quad 0<y<1/2.$$
In the occupation number basis for the particle species $a,b$, define the formal expansion
\begin{equation}\label{eq:GenWuState0}
|\Psi_{p,\Theta}\rangle:= \sum_{s=0}^\infty{(\tilde{y})^{-s}\binom{\Theta}{s}\sqrt{\binom{p+s}{s}}^{-1}}|p+s,s\rangle.
\end{equation}
It can be verified by direct computation that $|\Psi_{p,\Theta}\rangle$ satisfies the eigenvalue problem \eqref{ab-eigenproblem} with energy $E=\frac{p}{2}+\Theta$. A second collection of states, denoted $|\Psi^{(-)}_{p,\Theta}\rangle$, are constructed in an identical way, replacing $|p+s,s\rangle$ by $|s,p+s\rangle$ in \eqref{eq:GenWuState0}. A special case occurs when $\Theta = N\in \mathbb{N}$. The resulting states are then finite linear combinations, and formula \eqref{eq:GenWuState0} is replaced by
\begin{equation}\label{eq:GenWuState1}
|\Psi_{p,N}\rangle:= \sum_{s=0}^N{(\tilde{y})^{-s}\binom{N}{s}\sqrt{\binom{p+s}{s}}^{-1}}|p+s,s\rangle.
\end{equation}
\end{formula}


\subsubsection*{Comments on the construction of eigenstates}
The derivation of formula \eqref{eq:GenWuState0} is elaborated in the appendix, and follows from a direct expansion in the particle number basis. Some key points are now emphasized: 
\begin{enumerate}
\item{} All eigenstates of $\mathcal{H}^{(\alpha_c)}_\mathrm{ab}$ (as well as $\mathcal{H}_\mathrm{ab}$) must be superpositions of states with fixed momentum $p$, either given by $|p+s,s\rangle$, or $|s,p+s\rangle$ for $s\in\mathbb{N}$. This is because the operator (Casimir's operator \color{black}\cite{Perelomov}\color{black})
\begin{equation} \mathcal{C} := \frac{1}{4}-\frac{1}{4}(a^\ast a -b^\ast b)^2 \end{equation}
commutes with $a^\ast b^\ast$, $ab$, and $\frac{1}{2}(a^\ast a+b^\ast b+1)$. Casimir's operator is constant on each of the subspaces
$\mathrm{span}\{|p+s,s\rangle|\,\,s\in\mathbb{N}\}\subset\mathbb{F}_\mathrm{ab}$ for $p\in\mathbb{N}$.

\item{} Formula \eqref{formula1} suggests that every complex number is a (possible) eigenvalue of $\mathcal{H}^{(\alpha_c)}_\mathrm{ab}$. This is sometimes, but not always, the case. In particular, for $\tilde{y}<1$, the only states in the spectrum will be $|\Psi_{p,N}\rangle$ for $N\in\mathbb{N}$, while for $1\le\tilde{y}$, the states $|\Psi_{p,\Theta}\rangle$ are eigenstates for all $\Theta\in\mathbb{C}$. It is clear that the coefficients in \eqref{eq:GenWuState0} can be approximated by $\tilde{y}^{-s}\binom{\Theta}{s}$ for large $s$, so the parameters $\tilde{y}, \Theta$ will determine which of the Fock space vectors $|\Psi_{p,\Theta}\rangle$ are normalizable. 

\item{} When $\Theta=N\in\mathbb{N}$, the expansion \eqref{eq:GenWuState0} is a finite sum, and the energy of state $|\Psi_{p,N}\rangle$ is $E = N+\frac{p}{2}$. The states $|\Psi_{p,N}\rangle$ have the same energies as the tensor product eigenstates of the Hamiltonian $ \frac{1}{2}(a^\ast a + b^\ast b)$, which is the limit of the operator $\mathcal{H}^{(\alpha_c)}_\mathrm{ab}$ as $y\to 0$. 

\end{enumerate}

The remainder of this work is devoted to analyzing the spectrum of $\mathcal{H}^{(\alpha_c)}_\mathrm{ab}$ in the two regimes $\tilde{y}<1$ and $1\le\tilde{y}$. We undertake this in order to answer the physically pertinent questions: \textbf{(i)} which of the states $|\Psi_{p,\Theta}\rangle$ in formula \eqref{formula1} determine eigenstates $|\Psi_E\rangle$ of $\mathcal{H}_\mathrm{ab}$ by means of the transform 
\begin{equation}
|\Psi_E\rangle = \exp(-\alpha_c a^\ast b^\ast)|\Psi_{p,\Theta}\rangle\,\,?
\end{equation}
\textbf{(ii)} If the point spectrum of $\mathcal{H}^{(\alpha_c)}_\mathrm{ab}$ is the entire complex plane, is there an easy way to distinguish the mappings of states $|\Psi_E\rangle$ among the states $|\Psi_{p,\Theta}\rangle$? \textbf{(iii)} Finally, can all of the LHY eigenstates be recovered in this way? 

As aluded to, we will handle the cases $\tilde{y}<1$ and $\tilde{y}\ge 1$ separately. Questions \textbf{(i)}~\textendash~\textbf{(iii)} are answered definitively for $\tilde{y}<1$ in the next theorem, which provides a template for the more complicated case $\tilde{y}\ge 1$. When $\tilde{y}<1$, only the states $\{\Psi_{p,N}\}_{p,N\in\mathbb{N}}$ will be normalizable in $\mathbb{F}_\mathrm{ab}$ and in this sense the eigenvalue problem for $\mathcal{H}^{(\alpha_c)}_\mathrm{ab}$ is a regular perturbation of the problem for $\frac{1}{2}(a^\ast a+ b^\ast b)$. Sections \eqref{sec:non-uni-family} and \eqref{completeness} are devoted to the case $\tilde{y}\ge 1$.

\subsubsection*{Discrete spectrum in the Case $0<\tilde{y}<1$:} When $\tilde{y}<1$, the eigenstates of $\mathcal{H}^{(\alpha_c)}_\mathrm{ab}$ will be $|\Psi_{p,N}\rangle$ of formula \eqref{formula1}, where $N\in\mathbb{N}$. Energies of these states will be equal to the energies of tensor products $|p+N,N\rangle$ as eigenstates of $\frac{1}{2}(a^\ast a +b^\ast b)$. The condition $0<\tilde{y}<1$ in $\mathcal{H}^{(\alpha_c)}_\mathrm{ab}$ translates to the condition $|k|^2 > 4\pi a \rho(\sqrt{5}-2)>0$ in the $k-$momentum component of $\mathcal{H}_\mathrm{LHY}$. Thus, this case represents an infinite collection of momenta in the Lee-Huang-Yang Hamiltonian.

\begin{proposition} \label{prop1}
Suppose $0<\tilde{y}<1$. Then 
\begin{equation}
\sigma_\mathrm{p}\big(\mathcal{H}^{(\alpha_c)}_\mathrm{ab}\big) = \big\{\frac{p}{2}+N, \,\mathrm{for}\,\, p,N\in\mathbb{N}\big\}.
\end{equation}
The eigenstates are given by $|\Psi_{p,N}\rangle$, $|\Psi^{(-)}_{p,N}\rangle$ of formula \eqref{formula1}, for $N\in\mathbb{N}$, and exhibit the same degeneracy as the momentum states $\big\{|N,N+p\rangle\,\,\mathrm{or}\,\, |N+p,N\rangle\big\}$ as eigenstates of the operator $\frac{1}{2}(a^\ast a+b^\ast b)$ for $p,N\in\mathbb{N}.$
\end{proposition}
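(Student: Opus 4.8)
The plan is to exploit the fact that $\mathcal{H}^{(\alpha_c)}_\mathrm{ab}$ is block-diagonal with respect to Casimir's operator, reducing the spectral problem to a single scalar recurrence on each block, and then to decide normalizability of the resulting formal eigenvectors. First I would invoke the decomposition already noted in the text: since $\mathcal{C}=\frac14-\frac14(a^\ast a-b^\ast b)^2$ commutes with $\mathcal{H}^{(\alpha_c)}_\mathrm{ab}$, the operator leaves each subspace $V_p:=\mathrm{span}\{|p+s,s\rangle:s\in\mathbb{N}\}$ invariant, as well as its mirror image spanned by the $|s,p+s\rangle$. It therefore suffices to compute the point spectrum of the restriction $\mathcal{H}^{(\alpha_c)}_\mathrm{ab}\big|_{V_p}$ for each $p\in\mathbb{N}$; the states $|\Psi^{(-)}_{p,N}\rangle$ are handled identically by symmetry.

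Second, on $V_p$ I would write a candidate eigenvector as $|\Psi\rangle=\sum_s c_s|p+s,s\rangle$ and compute the action of each term of $\mathcal{H}^{(\alpha_c)}_\mathrm{ab}=\frac12(a^\ast a+b^\ast b)+\tilde{y}\,ab$. The diagonal part contributes $(\frac{p}{2}+s)c_s$, while $\tilde{y}\,ab$ lowers $s$ by one and contributes $\tilde{y}\sqrt{(p+s+1)(s+1)}\,c_{s+1}$, so the operator is lower-bidiagonal. Consequently the eigenvalue equation $\mathcal{H}^{(\alpha_c)}_\mathrm{ab}|\Psi\rangle=E|\Psi\rangle$ is equivalent to the two-term recurrence $\tilde{y}\sqrt{(p+s+1)(s+1)}\,c_{s+1}=(E-\frac{p}{2}-s)\,c_s$. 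Writing $\Theta:=E-\frac{p}{2}$ and solving forward from $c_0$ produces exactly the coefficients $c_s\propto\tilde{y}^{-s}\binom{\Theta}{s}\sqrt{\binom{p+s}{s}}^{-1}$ of \eqref{eq:GenWuState0}; this simultaneously shows that the formal eigenspace in each $V_p$ is at most one-dimensional, since $c_0=0$ forces the whole sequence to vanish. Hence every eigenvector is captured by Formula \eqref{formula1}, and only normalizability remains.

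Third, I would settle normalizability via the asymptotics $|\binom{\Theta}{s}|\sim C_\Theta\,s^{-\mathrm{Re}\,\Theta-1}$ for $\Theta\notin\mathbb{N}$ together with $\binom{p+s}{s}\sim s^{p}/p!$, which give $|c_s|^2\sim C\,\tilde{y}^{-2s}\,s^{-2\mathrm{Re}\,\Theta-2-p}$. When $0<\tilde{y}<1$ the geometric factor $\tilde{y}^{-2s}$ grows without bound and dominates the polynomial correction, so $\sum_s|c_s|^2$ diverges for every non-integer $\Theta$; the only way to obtain a finite-norm state is for the series to terminate, which occurs precisely when $\binom{\Theta}{s}$ vanishes for all large $s$, i.e.\ when $\Theta=N\in\mathbb{N}$, reducing \eqref{eq:GenWuState0} to the finite sum \eqref{eq:GenWuState1} with eigenvalue $E=\frac{p}{2}+N$. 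Collecting over all $p$ and over the mirror sectors yields $\sigma_\mathrm{p}(\mathcal{H}^{(\alpha_c)}_\mathrm{ab})=\{\frac{p}{2}+N:p,N\in\mathbb{N}\}$. For the degeneracy claim I would set $m:=p+2N$ and count the $(p,N)$ pairs with $p+2N=m$, assigning one state when $p=0$ and two states ($|\Psi_{p,N}\rangle$ and its $(-)$-partner) when $p>0$; a short parity check in the even and odd cases gives $m+1$ independent eigenstates at energy $\frac{m}{2}$, exactly matching the degeneracy of $\frac12(a^\ast a+b^\ast b)$.

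The main obstacle I anticipate is the normalizability step: I must verify the binomial asymptotics carefully, including the complex case where the decisive quantity is $\mathrm{Re}\,\Theta$, and confirm that the geometric blow-up from $\tilde{y}^{-2s}$ genuinely overwhelms the algebraic corrections so that no nonterminating $\Theta$ can accidentally produce a square-summable tail. A secondary point requiring care is making precise what ``point spectrum'' means for this unbounded, non-self-adjoint restriction—namely that each eigenvector must lie in $\mathbb{F}_\mathrm{ab}$ and satisfy the recurrence—so that the argument that $c_0=0$ annihilates the whole sequence rigorously excludes any eigenstate supported away from $s=0$, and the point spectrum is thereby exhausted by the terminating solutions.
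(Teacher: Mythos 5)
Your proposal is correct and follows essentially the same route as the paper: restrict to the Casimir sectors, solve the two-term recurrence to recover the coefficients of Formula \eqref{formula1}, and then exclude every non-terminating solution via the asymptotics $|c_s|^2\sim C\,\tilde{y}^{-2s}s^{-2\mathrm{Re}\,\Theta-2-p}$, whose geometric factor $\tilde{y}^{-2s}$ diverges when $0<\tilde{y}<1$. The only differences are matters of completeness rather than method: the paper defers the recurrence derivation to its appendix (as part of deriving Formula \eqref{formula1}) and carries out the same Stirling-type estimate with $\Theta$ written in the exponent where $\mathrm{Re}\,\Theta$ belongs, so your handling of complex $\Theta$ and your explicit $(p,N)$-counting of the degeneracy are, if anything, slightly more careful than the printed proof.
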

\begin{proof}

It is clear that the states $|\Psi_{p,N}\rangle$ with $N\in\mathbb{N}$ have finite norm in $\mathbb{F}_\mathrm{ab}$ as finite superpositions in the particle number basis. 

Now suppose $\Theta\in\mathbb{C}\setminus\mathbb{N}$, so the coefficients
$$c_s:=(\tilde{y})^{-s}\binom{\Theta}{s}\sqrt{\binom{p+s}{s}}^{\,\,-1},$$
are nonzero for all $s\ge0$. We proceed to show that $\big\{c_{s}\big\}_{s=0}^\infty$ cannot be square summable.

Indeed, using the Gamma function to describe the generalized binomial coefficient, we have
\begin{align*}
|c_{s}|^2
&=\tilde{y}^{-2s}\frac{\Gamma(1+p)}{\Gamma(-\Theta)^2}\cdot\frac{\Gamma(s-\Theta)^2}{\Gamma(s+p+1)\Gamma(s+1)}|c_{0}|^2. 
\end{align*}
Let us write \textit{Stirling's approximation} as follows \color{black} \cite{Erdelyi}\color{black}
\begin{equation}
\log(\Gamma(z)) \approx \big(z-\frac{1}{2}\big)\log(z) - z +\frac{1}{2}\log(2\pi),\quad \mathrm{as}\quad |z|\to\infty.
\end{equation}
Here, the symbol ``$\approx$" denotes asymptotic equivalence. Thus
\begin{equation}
\log\big(\Gamma(s-\Theta)\big) 
\approx s\log(s) -(\Theta+1/2)\log(s) -s + C,\quad \mathrm{as}\quad s\to\infty,
\end{equation}
where $C$ is a constant whose specific value does not matter. Similarly
\begin{equation}
\log(\Gamma(s+p+1)) \approx s\log(s) +(p+1/2)\log(s) -s,\quad\mathrm{as}\quad s\to\infty.
\end{equation}
Putting the two lines above together gives
\begin{equation}
\frac{\Gamma(s-\Theta)}{\Gamma(s+p+1)\Gamma(s+1)}\approx\frac{e^{2\log s}e^{-2(\Theta+1/2)\log(s)}e^{-2s}}{e^{s\log s}e^{(p+1/2)\log s}e^{- s}e^{s\ln s}e^{1/2\log s}e^{- s}} \\
=\frac{1}{s^{2\Theta+2+p}},\quad\mathrm{as}\quad s\to\infty.
\end{equation}
Therefore
\begin{equation}\label{eq:approxCoeff}
|c_{p+s,s}|^2\approx K\frac{\tilde{y}^{-2s}}{s^{2\Theta+2+p}}|c_{p,0}|^2,\quad\mathrm{for}\quad K=\frac{\Gamma(1+p)}{\Gamma(-\Theta)^2}\quad\mathrm{as}\quad s\to\infty.
\end{equation}
If $|\tilde{y}|<1$, then it follows that $\sum_{s=0}^\infty{|c_{s}|^2}$ cannot be finite. This concludes the proof. \color{black}
\end{proof}

When $\tilde{y}\ge 1$, the estimates in the above proof show that the states $|\Psi_{p,\Theta}\rangle$ are normalizable for all values $\Theta\in\mathbb{C}$. Answering the questions \textbf{(i)}~\textendash~\textbf{(iii)} becomes more difficult in this case. This is the subject of the next section.

\subsection*{A Geometric Picture of Degeneracies:} We conclude this section with a geometric description of the degenerate eigenspaces of $\mathcal{H}^{(\alpha_c)}_\mathrm{ab}$ in $\mathbb{F}_{ab}$ when $\tilde{y}<1$. The Fock space $\mathbb{F}_{ab}$ is the orthogonal direct sum of the following subspaces: 
$$\mathbb{F}_{ab}=\bigoplus_{p=-\infty}^{\infty}{\mathbb{F}^{(p)}},\quad\mathbb{F}^{(+p)}:=\mathrm{span}\{|p+s,s\rangle:s\in\mathbb{N}\},\quad\mathbb{F}^{(-p)}:=\mathrm{span}\{|s,p+s\rangle:s\in\mathbb{N}\}.$$
On the two-dimensional lattice describing occupation numbers $(n_a,n_b)$ of the states $|n_a,n_b\rangle$, the subspace $\mathbb{F}^{(p)}$ describes an infinite line segment terminating at a point $(0,p)$ or $(p,0)$. The states $|\Psi_{p,N}\rangle$ are elements of $\mathbb{F}^{(+p)}$, while the states $|\Psi^{(-)}_{p,N}\rangle$ are elements of $\mathbb{F}^{(-p)}$.

It was established that the state $|\Psi_{p,N}\rangle$ has energy $E=\frac{p}{2}+N$, identical to the energy of the tensor product $|p+N,N\rangle$ as an eigenstate of $\frac{1}{2}(a^\ast a+b^\ast b)$, and that within the subspace $\mathbb{F}^{(p)}$ it is the unique state with this energy. Accordingly, on $\mathbb{N}\times\mathbb{N}$, the line segment $\{(p,N)\big|\,\,\frac{p}{2}+N =E\}$ connecting endpoints $(p+N,N)$ and $(N,p+N)$ intersects all ordered pairs corresponding to degenerate states $|\Psi_{p,N}\rangle$ with energy $E$ \color{black}(see figure \eqref{fig1})\color{black}. 

Finally, consider the (singular) limit $\tilde{y}\to 0$ in the state $|\Psi_{p,N}\rangle$ with energy $E=\frac{p}{2}+N$ given by equation \eqref{eq:GenWuState1}. In this limit, the finite sum corresponding to the (normalized) state reduces to a single term
$$|\Psi_{p,N}\rangle\to|N+s,s\rangle,\quad\mathrm{as} \,\,\tilde{y}\to0.$$ This is expected by the fact that $\mathcal{H}^{(\alpha_c)}_\mathrm{ab}$ approaches $\frac{1}{2}(a^\ast a + b^\ast b)$ in this limit.

\begin{figure}[b!]
  \includegraphics[width=\linewidth]{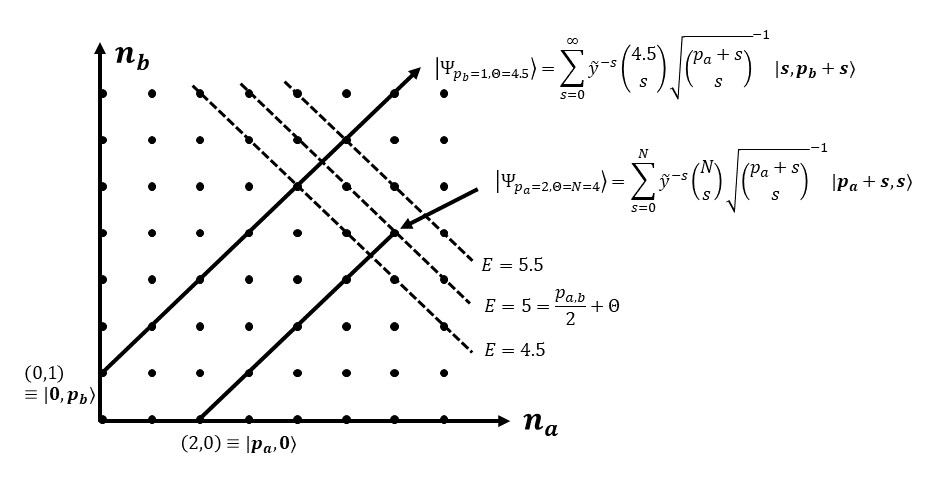}
  \caption{\color{black}\color{black}}
  \label{fig1}
\end{figure} 

\section{Extracting the Bogoliubov spectrum for $\tilde{y}\ge 1$}\label{sec:non-uni-family}
The current section contains the technical results which allows us to answer questions \textbf{(i)}~\textendash~\textbf{(iii)} of the previous section for the case $\tilde{y}\ge 1$. In the broader scope of the LHY Hamiltonian, this amounts to characterizing those solutions to the the eigenvalue problem 
$$\big(\exp(-\mathcal{P})\mathcal{H}_\mathrm{LHY}\exp(\mathcal{P})\big)|\Psi(k)\rangle = E |\Psi(k)\rangle,$$
which yield solutions to the eigenvalue problem
$$\mathcal{H}_\mathrm{LHY}|\tilde{\Psi}(k)\rangle = E|\tilde{\Psi}(k)\rangle$$ by means of the transform $\exp(\mathcal{P})|\Psi(k)\rangle = |\tilde{\Psi}(k)\rangle$. We do this by studying eigenstates of the formal Hamiltonian $\mathcal{H}^{(\alpha_c)}_\mathrm{ab}(\tilde{y})$ as in the previous section.  Proposition \eqref{prop1}  demonstrated that when $\tilde{y}\ge 1$, the eigenstates of $\mathcal{H}^{(\alpha_c)}_\mathrm{ab}(\tilde{y})$ are givn by $|\Psi_{p,\Theta}\rangle$ for all $p\in \mathbb{N}$ and $\Theta\in\mathbb{C}$. This poses an apparent problem for using the non-unitary scheme.

We proceed by considering the operator family $\mathcal{H}_\mathrm{ab}^{(\alpha)}$ for $0<\alpha\le\alpha_c$. When $\alpha = \alpha_c$, the formula
$$\mathcal{H}_\mathrm{ab}\big(e^{-\alpha_c a^\ast b^\ast}|\Psi_{p,\Theta}\rangle\big) = \Big((1+2\alpha_c y)(\frac{p}{2}+\Theta)+\alpha_c y\Big)(e^{-\alpha_c a^\ast b^\ast}|\Psi_{p,\Theta}\rangle),\quad p\in\mathbb{N},\,\Theta \in\mathbb{C},$$
implies that states in the set $\{e^{-\alpha_c a^\ast b^\ast}|\Psi_{p,\Theta}\rangle\}_{p,\Theta}$ are also eigenstates of $\mathcal{H}_\mathrm{ab}$, provided that $|\Psi_{p,\Theta}\rangle\in\mathrm{dom}\big(e^{-\alpha_c a^\ast b^\ast}\big)$. More specifically, given $\tilde{y}\ge 1$ we will show: \textbf{(a)} that the only states $|\Psi_{p,\Theta}\rangle\in\mathrm{dom}\big(e^{-\alpha_c a^\ast b^\ast}\big)$ occur for $\Theta\in\mathbb{N}$, and \textbf{(b)} that the collection $\{|\Psi_{p,N}\rangle\}_{p,N\in\mathbb{N}}$ constitutes a basis of $\mathbb{F}_\mathrm{ab}$. We conduct the first of these two tasks in the present section.\color{black}

\subsection*{Method of generating functions}
\label{sec:non-uni-family}
Inspired by Lee, Huang, and Yang \color{black}\cite{Lee Huang Yang I}\color{black}, we define the generating function $G_{|\Psi_E\rangle}(z):\mathbb{C}\to\mathbb{C}$ for eigenstate $|\Psi_E\rangle$ of $\mathcal{H}^{(\alpha)}_\mathrm{ab}$, where $0<\alpha\le \alpha_c$, and demonstrate how the singularities of $G_{|\Psi_E\rangle}$ in the complex plane furnish a criteria for determining eigenstates of the Hamiltonian $\mathcal{H}_\mathrm{ab}= \mathcal{H}^{(\alpha = 0)}_\mathrm{ab}$. More specifically, the action of the exponential map 
$$|\Psi_E\rangle\mapsto e^{-\alpha a^*b^*}|\Psi_E\rangle$$
induces the following transformation of the generating function (see proposition \eqref{Transform}):
$$G_{|\Psi_E\rangle}(z)\mapsto G_{e^{-\alpha a^*b^*}|\Psi_E\rangle}(z) = \Big(\frac{1}{1+\alpha z}\Big)G_{|\Psi_E\rangle}\Big(\frac{z}{1+\alpha z}\Big).$$
This fact is exploited for the case $\tilde{y}\ge 1$; the function $\frac{1}{1+\alpha z}G_\Psi(\frac{z}{1+\alpha z})$ must not have a  singularity in the unit disk if $|\Psi_E\rangle\in\mathrm{dom}(e^{-\alpha a^\ast b^\ast})$, which will only be possible for discrete energies $E$ corresponding to the Bogoliubov spectrum.

When $0<\alpha<\alpha_c(y)$, both of the constants $y_1(\alpha)$ and $y_2(\alpha)$ in the transformed Hamiltonian $\mathcal{H}^{(\alpha)}_\mathrm{ab}$ will be nonzero, and must be related to each other as in Definition \eqref{Def2}. We consider the range of values~ $0\le y_1(\alpha)<\infty$ ~as ~$0\le \alpha_c<1$ when $k\in\mathbb{Z}^+_L$. Eigenstates of $\mathcal{H}^{(\alpha)}_\mathrm{ab}$ are no longer finite superpositions in the momentum basis. The Casimir operator $\mathcal{C}$ still commutes with $\mathcal{H}^{(\alpha)}_\mathrm{ab}$, so we know that eigenstates of $\mathcal{H}^{(\alpha)}_\mathrm{ab}$ will remain superpositions of the states $|s+p,s\rangle$ or $|s,s+p\rangle$ for fixed $p$.

\begin{definition} \label{genfunc} Let $|\Psi\rangle\in\mathbb{F}_\mathrm{ab}$ have expansion in the particle number basis given by  
$$|\Psi\rangle=\sum_{s=0}^\infty{c_s|p+s,s\rangle},\quad\mathrm{for}\quad \{c_{s}\}_{s=0}^\infty\in\ell^2(\mathbb{C}),$$
and define the rescaled coefficients $C_s:=\sqrt{\frac{s!}{(p+s)!}}c_s$ (alternatively $|\Psi\rangle$ can have an exapnsion in the states $|s,s+p\rangle$).
The generating function which corresponds to state $|\Psi\rangle$ is a formal power series in complex variable $z$ defined by  $$G_{|\Psi\rangle}(z):=\sum_{s=0}^\infty{C_sz^s}.$$
\end{definition}

It is clear that for fixed $p\in\mathbb{N}$, the sequence $\{C_s\}_{s=0}^\infty$ is square summable if and only if $\{c_s\}_{s=0}^\infty$ is as well. Since $$|\sum_s{C_sz^s}|^2\le \sum_s{|C_s|^2}\sum_s{|z|^{2s}},$$ the condition that $\||\Psi\rangle\|_{\mathbb{F}_\mathrm{ab}}=1$ means that $G_{|\Psi\rangle}(z)$ will be analytic in the unit disk, $\{z\in\mathbb{C}:|z|<1\}$.


\color{black}We refer to the appendix for details regarding the following proposition, which describes properties of the function $G_{|\Psi_E\rangle}(z)$ when $|\Psi_E\rangle$ is an eigenstate of $\mathcal{H}^{(\alpha)}_\mathrm{ab}$.\color{black}

\begin{proposition}\label{genfuncfacts}
Let $|\Psi_E\rangle\in\mathbb{F}_\mathrm{ab}$ be an eigenstate of $\mathcal{H}^{(\alpha)}_\mathrm{ab}$ for $0<\alpha\le\alpha_c\le1$, with energy $E\in\mathbb{C}$, and define the generating function $G_{|\Psi_E\rangle}(z)$ according to definition \eqref{genfunc}. Then
\begin{enumerate}

\item{} $G_{|\Psi_E\rangle}(z)$ satisfies the first-order ordinary differential equation:
\begin{equation}\label{eq:diffEq}
\begin{split}
z &\Big\{y_2(\alpha)z^2+z+y_1(\alpha)\Big\}\big(G_{|\Psi_E\rangle}(z)\big)' \\ +&\Big\{y_2(\alpha)z^2+\left(\frac{p}{2}-E\right)z+y_1(\alpha)p\Big\}G_{|\Psi_E\rangle}(z)=C_0y_1(\alpha)p,
\end{split}\end{equation}
where $C_0$ is the zeroth rescaled coefficient in the definition of $G_{|\Psi_E\rangle}(z)$.

\item{} The general solution to \eqref{eq:diffEq} is given by
\begin{equation}
G_{|\Psi_E\rangle}(z) = G_{hom}(\Psi_E,z) + I_{\Psi_E}(z),
\end{equation}
where the solution to the homogeneous problem reads
\begin{equation}\label{eq:homogenous}
G_{hom}(\Psi_E,z)=Kz^{-p}(z-z_+)^B(z-z_-)^C,
\end{equation}
for  $K,B\in\mathbb{C}$ and $C=p-1-B$, and 
\begin{equation}
I_{\Psi_E}(z):= \frac{C_0y_1(\alpha)p}{y_2(\alpha)}(z-z_+)^B(z-z_-)^C\int_0^z{u^{p-1}(u-z_+)^{-(1+B)}(u-z_-)^{-(1+C)}du}.
\end{equation}
The integral in this expression is a contour integral with contour lying within a simply-connected region of analyticity so that it is uniquely specified by the endpoints. 

\item{} The constant $K$ is arbitrary, while the constant $B$ is related to the energy $E\in\mathbb{C}$ via 
\begin{equation}\label{eq:BandE}
B
=\frac{(p/2-E-1)z_{+}(\alpha)+y_1(\alpha)(p-1)}{z_+(\alpha)+2y_1(\alpha)}.
\end{equation}

\item{} For $p>0$, $G_{hom}(\Psi_E,z)$ has a pole at $z=0$, but $I_{\Psi_E}(z)$ does not. It follows that $G_{|\Psi_E\rangle}(z) = I_{\Psi_E}(z)$. The function $I_{\Psi_E}(z)$ manifests a possible singularity at $z=z_+$ depending on the value of $B$. 
\end{enumerate}
\end{proposition}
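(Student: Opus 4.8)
The plan is to derive \eqref{eq:diffEq} directly from the eigenvalue equation and then treat it as an elementary first-order linear ODE. First I would expand $\mathcal{H}^{(\alpha)}_\mathrm{ab}|\Psi_E\rangle = E|\Psi_E\rangle$ in the occupation basis. Using $a^\ast a|p+s,s\rangle=(p+s)|p+s,s\rangle$, $b^\ast b|p+s,s\rangle=s|p+s,s\rangle$ together with the raising and lowering actions of $a^\ast b^\ast$ and $ab$, collecting the coefficient of $|p+s,s\rangle$ produces a three-term recurrence for the $c_s$. The crucial step is to pass to the rescaled coefficients $C_s=\sqrt{s!/(p+s)!}\,c_s$ of Definition \eqref{genfunc}: the square-root weights $\sqrt{(p+s+1)(s+1)}$ and $\sqrt{(p+s)s}$ then collapse into the polynomials $(p+s+1)$ and $s$, leaving the clean recurrence
$$y_1(p+s+1)C_{s+1}+\left(\tfrac{p}{2}+s-E\right)C_s+y_2\,s\,C_{s-1}=0.$$
Multiplying by $z^s$, summing over $s\ge0$, and using the identities $\sum_s sC_sz^s=zG'$, $\sum_s(s+1)C_{s+1}z^s=G'$, $\sum_sC_{s+1}z^s=(G-C_0)/z$, and $\sum_s sC_{s-1}z^s=z^2G'+zG$, then clearing the single power of $z$ in the denominator, yields exactly \eqref{eq:diffEq}.

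For part (2), I would solve \eqref{eq:diffEq} by the integrating-factor method. Factoring the coefficient of $G'$ as $z\,y_2(z-z_+)(z-z_-)$, where $z_\pm$ are the roots of $y_2z^2+z+y_1$, the homogeneous equation separates, and a partial-fraction expansion of the logarithmic derivative gives $G_{hom}=Kz^{-A}(z-z_+)^{-B'}(z-z_-)^{-C'}$ with $B:=-B'$, $C:=-C'$. Two residue computations fix the exponents: the residue at $z=0$ equals $y_1p/(y_2z_+z_-)=p$ via $z_+z_-=y_1/y_2$, so $A=p$; and because the rational function being integrated decays like $1/z$ at infinity, its residues sum to $1$, forcing $A+B'+C'=1$ and hence $C=p-1-B$. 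The particular solution then comes from variation of parameters: the integrand $\mu\cdot(\text{RHS})$ reduces to a constant multiple of $u^{p-1}(u-z_+)^{-1-B}(u-z_-)^{-1-C}$, and anchoring the contour at the base point $u=0$ delivers $I_{\Psi_E}(z)$, which inherits the overall factor $z^{-p}$ from the homogeneous prefactor.

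For part (3), the exponent $B=-B'$ is minus the residue of the logarithmic derivative at $z=z_+$. I would evaluate this residue and then use the defining relation $y_2z_+^2=-z_+-y_1$ to eliminate the quadratic term in the numerator, together with the Vieta relations $z_++z_-=-1/y_2$ and $z_+z_-=y_1/y_2$ to simplify $y_2z_+(z_+-z_-)=-(z_++2y_1)$ in the denominator; this collapses the residue to $(\tfrac{p}{2}-E-1)z_++y_1(p-1)$ over $z_++2y_1$, which is \eqref{eq:BandE}. For part (4), I would observe that $G_{hom}$ carries a pole of order $p$ at $z=0$, whereas $I_{\Psi_E}$ is regular there: for $p>0$ the lower endpoint $u=0$ makes the integral vanish to order $z^{p}$, which cancels the $z^{-p}$ from the homogeneous prefactor. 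Since square-summability of $\{C_s\}$ forces $G_{|\Psi_E\rangle}$ to be analytic in the unit disk (in particular pole-free at the origin), the arbitrary constant $K$ must vanish, leaving $G_{|\Psi_E\rangle}=I_{\Psi_E}$; the surviving factor $(z-z_+)^B$, together with the integrand's $(u-z_+)^{-1-B}$, produces a possible singularity at $z=z_+$ whose presence depends on $B$, flagging the condition whose resolution will later quantize $E$.

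I expect the main obstacle to be the algebraic reduction in part (3): passing from the raw residue at $z=z_+$ to the compact form \eqref{eq:BandE} requires using the quadratic relation to clear both the $y_2z_+^2$ term and the factor $z_+-z_-$, and the signs are easy to mis-track. By contrast the ODE derivation is routine once the rescaling $c_s\mapsto C_s$ is in place, and the analyticity argument of part (4) is short.
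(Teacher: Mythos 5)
Your proposal is correct and follows essentially the same route as the paper: the paper likewise obtains \eqref{eq:diffEq} by passing to the rescaled coefficients of definition \eqref{genfunc}, multiplying the difference scheme \eqref{diffScheme2} by $z^s$ and summing, and then invokes ``standard arguments'' (variation of parameters and residue bookkeeping, as sketched in its appendix) for parts (2)--(4), which you have simply carried out explicitly. One point in your favor: the prefactor $z^{-p}$ you retain in the particular solution is genuinely required (it is what makes $I_{\Psi_E}$ both solve the ODE and stay regular at the origin), and it agrees with the paper's appendix formula after the substitution $u=tz$; its absence from the displayed $I_{\Psi_E}$ in the proposition statement is a typo.
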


It is well-known that singularities of solutions to \eqref{eq:diffEq} can occur only at roots of the leading polynomial $Q(z) := z\big(y_2(\alpha)z^2+z+y_1(\alpha)\big)$, which are given by $z = \{0,z_\pm(\alpha)\}$ for
\begin{equation}\label{eq:zeros}
z_\pm(\alpha):=\frac{-1\pm\sqrt{1-4y_1(\alpha)y_2(\alpha)}}{2y_2(\alpha)},\quad\mathrm{assuming}\quad y_2(\alpha)\not=0.
\end{equation}
Let us now state a few properties of the zeros $z_{\pm}(\alpha)$. Recall that $\alpha_c = \frac{1-\sqrt{1-4y^2}}{2y}$:
\begin{itemize}
\item{} For any $\alpha\in[0,\alpha_c]$, we have $1-2\alpha y >0$.
\item{} For any $\alpha\in[0,\alpha_c]$, the polynomial $P(\alpha) = y-\alpha +\alpha^2 y$ is strictly positive. Hence, both $y_1(\alpha),\,y_2(\alpha)$ given in definition \eqref{Def2} are strictly positive for $\alpha\in[0,\alpha_c]$.
\item{} The discriminant, $1-4y_1(\alpha)y_2(\alpha)$ of equation \eqref{eq:zeros} is strictly positive for any $\alpha\in[0,\alpha_c]$. Hence, $z_\pm(\alpha)$ are real for $\alpha$ in the prescribed range.
\item{} $z_+(\alpha)z_-(\alpha)= 1$ with $|z_+(0)|<1$. Hence, $|z_-(0)|>1$. In addition, $|z_-(\alpha)|>1$ for any $\alpha\in\big(0,\alpha_c\big)$, with $z_-(\alpha)\to -\infty$ as $\alpha\uparrow \alpha_c$.
\item{} By contrast, $z_+(\alpha)$ may or may not be inside the unit circle. Indeed, the condition $|z_+(\alpha)|<1$ implies $y_1(\alpha)+y_2(\alpha)<1$, which is true for all $\alpha\in[0,\alpha_c]$ only if $$\tilde{y} = \frac{y}{\sqrt{1-4y^2}}<1.$$ 
Otherwise there exists some $\alpha_{max}<\alpha_c$ such that $|z_+(\alpha)|<1$ for $\alpha<\alpha_{max}$ and $|z_+(\alpha)|\ge 1$ for $\alpha_{max} < \alpha < \alpha_c$. We handle these two cases separately.
\end{itemize}

\subsubsection*{The Bogoliubov spectrum for $|z_+|<1$:}

The point spectrum $\sigma_\mathrm{p}(\mathcal{H}_\mathrm{ab})$ is found via values of $B$ which render $G_{|\Psi_E\rangle}(z)$ analytic in the unit disk using equation \eqref{eq:BandE}. As an example, consider the case $\alpha\le \alpha_c$ and $p=0$, so that $$G_{|\Psi_E\rangle}(z)=G_{hom}(z)=K\big(z-z_+(\alpha)\big)^B\big(z-z_-(\alpha)\big)^{-1-B}.$$ Since $z_+(\alpha)$ lies inside the unit disk by assumption, $G_{|\Psi_E\rangle}(z)$ is analytic in this region \textit{if and only if} $B\in\mathbb{N}$. It follows that the allowed energies $E$ must be discrete. This is extended to $p\not=0$ in the following proposition.


\begin{proposition}\label{analytic}
For $\alpha$ such that $|z_+(\alpha)|<1$, and $p>0$, the function $G_{|\Psi_E\rangle}(z)$ defined by  $$G_{|\Psi_E\rangle}(z)=\frac{C_0y_1(\alpha)p}{y_2(\alpha)}(z-z_+)^B(z-z_-)^C\int_0^z{u^{p-1}(u-z_+)^{-(1+B)}(u-z_-)^{-(1+C)}du}$$ is analytic if and only if $B\in\mathbb{N}$ and $B\ge p $. This implies that the energies $E\in\sigma_\mathrm{p}(\mathcal{H}_\mathrm{ab}^{(\alpha)})$ must be discrete. In this case we have 
$$(1-2\alpha y)\sigma_\mathrm{p}(\mathcal{H}_\mathrm{ab}^{(\alpha)}) - \alpha y=\sigma_\mathrm{p}(\mathcal{H}_\mathrm{ab}).$$
\end{proposition}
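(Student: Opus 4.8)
The plan is to reduce the whole question to the analytic behaviour of $G_{|\Psi_E\rangle}=I_{\Psi_E}$ at the single point $z_+$, and then to extract a closed form for the coefficient of its singular part there. The bullet points preceding the proposition establish that, for $\alpha\in[0,\alpha_c]$, the roots $z_\pm(\alpha)$ are real with $z_+z_-=1$ and $|z_-|>1$; under the standing hypothesis $|z_+|<1$, the only candidate singularity of $I_{\Psi_E}$ inside the unit disk is $z=z_+$, since $z=0$ is regular (because $G_{|\Psi_E\rangle}=I_{\Psi_E}$ by part 4 of Proposition \eqref{genfuncfacts}, using $p>0$) and $z_-$ lies outside the disk. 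Normalizability of $|\Psi_E\rangle$ is equivalent to $\{C_s\}\in\ell^2$, hence to the radius of convergence of $G_{|\Psi_E\rangle}$ being at least $1$; a singularity at $z_+$ would force the coefficients to grow like $|z_+|^{-s}$ and destroy square-summability, while analyticity at $z_+$ makes $G_{|\Psi_E\rangle}$ analytic up to $z_-$ and gives geometric decay. So the task is exactly: determine when $I_{\Psi_E}$ is analytic at $z_+$.

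To analyze $z_+$ I would freeze the factors analytic and non-vanishing there, namely $(z-z_-)^C$ and $\Phi(u):=u^{p-1}(u-z_-)^{B-p}$ (using $-(1+C)=B-p$), and expand $\Phi(u)=\sum_{n\ge0}a_n(u-z_+)^n$ with $a_0=\Phi(z_+)\ne0$. Integrating $\Phi(u)(u-z_+)^{-(1+B)}$ term by term and multiplying by the prefactor $(z-z_+)^B$ then splits into two regimes. When $B\notin\mathbb{N}$ the prefactor leaves a residual branch term $(z-z_+)^B$ whose coefficient is a single constant $K_1$ coming from the lower limit of integration, so analyticity is equivalent to $K_1=0$. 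When $B=m\in\mathbb{N}$ the term $n=m$ of the antiderivative produces a logarithm, and analyticity is equivalent to vanishing of the log coefficient $a_m$, the $m$-th Taylor coefficient of $\Phi$ at $z_+$.

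The key computation, and what I expect to be the main obstacle, is to evaluate $K_1$ in closed form. I would substitute $u=z_+t$ in $K_1=\int_0^{z_+}u^{p-1}(u-z_-)^{B-p}(u-z_+)^{-(1+B)}\,du$ (convergent for $\mathrm{Re}\,B<0$, and continued in $B$ otherwise) and recognize an Euler integral for ${}_2F_1$; because the substitution forces the first upper parameter to coincide with the lower parameter, the hypergeometric function degenerates via ${}_2F_1(a,b;a;x)=(1-x)^{-b}$, leaving
$$
K_1 \;=\; (\text{nonzero factor})\cdot\frac{\Gamma(p)\,\Gamma(-B)}{\Gamma(p-B)}\;=\;(\text{nonzero factor})\cdot\frac{\Gamma(p)}{\prod_{j=0}^{p-1}(j-B)}.
$$
This ratio has no zeros and has poles exactly at $B\in\{0,1,\dots,p-1\}$. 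Hence for $B\notin\mathbb{Z}$ we get $K_1\ne0$, so $I_{\Psi_E}$ carries a genuine branch point and is never analytic; the poles at $B\in\{0,\dots,p-1\}$ are precisely the resonant integer cases where the logarithm appears, and there the log coefficient $a_m$ is nonzero, which I would confirm independently by noting that, since $z_+<0$ and $z_+-z_-=\sqrt{1-4y_1y_2}/y_2>0$, the finite sum giving $a_m$ consists of terms all of one sign. For integer $B=m\ge p$, by contrast, $\Phi$ is a polynomial of degree $m-1<m$, so $a_m=0$ and no logarithm survives while $(z-z_+)^m$ is analytic. Collecting the cases, $I_{\Psi_E}$ is analytic precisely when $B\in\mathbb{N}$ with $B\ge p$.

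It then remains to convert this into the spectral statement. Through \eqref{eq:BandE}, $B$ is an affine function of the eigenvalue $E$, so the admissible set $\{B\in\mathbb{N}:B\ge p\}$ pins $E$ to a discrete set, giving $\sigma_\mathrm{p}(\mathcal{H}^{(\alpha)}_\mathrm{ab})$ discrete. For the identity $(1-2\alpha y)\sigma_\mathrm{p}(\mathcal{H}^{(\alpha)}_\mathrm{ab})-\alpha y=\sigma_\mathrm{p}(\mathcal{H}_\mathrm{ab})$ I would use the similarity relation of Definition \eqref{Def2}, which rearranges to $\exp(\alpha a^\ast b^\ast)\mathcal{H}_\mathrm{ab}\exp(-\alpha a^\ast b^\ast)=(1-2\alpha y)\mathcal{H}^{(\alpha)}_\mathrm{ab}-\alpha y$; applied to an eigenstate this shows that $E\mapsto(1-2\alpha y)E-\alpha y$ carries eigenvalues of $\mathcal{H}^{(\alpha)}_\mathrm{ab}$ to eigenvalues of $\mathcal{H}_\mathrm{ab}$, provided $\exp(-\alpha a^\ast b^\ast)|\Psi_E\rangle$ is normalizable. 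That provision is exactly what the generating-function transform $G\mapsto\frac{1}{1+\alpha z}G\big(\frac{z}{1+\alpha z}\big)$ controls: since $|z_+(\alpha')|<1$ for every $\alpha'\in[0,\alpha]$ in this regime and the associated Möbius map preserves analyticity in the unit disk, the transform sends disk-analytic eigenfunctions to disk-analytic eigenfunctions in both directions. This yields a bijection of point spectra and upgrades the eigenvalue map to the stated equality.
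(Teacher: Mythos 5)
Your proof is correct, but it follows a genuinely different route from the paper's in both halves of the statement. For the analyticity criterion, the paper (in its appendix) works globally: it rescales $u=tz$, expands $\big(1-t(z/z_-)\big)^{B-p}$ binomially, and integrates term by term to write $G_{|\Psi_E\rangle}$ as a prefactor proportional to $\frac{\Gamma(-B)}{\Gamma(p-B)}\,(z/z_+)^{-p}\,(1-z/z_-)^{p-1-B}$ times an infinite series of hypergeometric functions $F\big(p+n-B,1;1-B;1-\frac{z}{z_+}\big)$, and reads off the dichotomy from that closed form. You instead do a purely local analysis at $z=z_+$ (correctly identified, via part 4 of Proposition \eqref{genfuncfacts} and $|z_-|>1$, as the only candidate singularity in the disk), reduce analyticity to the vanishing of the lower-limit constant $K_1$ (or of the log coefficient $a_B$ in the resonant cases), and collapse $K_1$ to the same Gamma ratio $\Gamma(p)\Gamma(-B)/\Gamma(p-B)$ through the degenerate Euler integral $F(a,b;a;x)=(1-x)^{-b}$. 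The two arguments pivot on the same ratio, but your explicit treatment of the resonant values $B\in\{0,\dots,p-1\}$ — the Leibniz sign argument using $z_+<0$ and $z_+-z_->0$ to show $a_B\neq 0$ — is a genuine improvement, since the paper's closed form contains factors ($1/B$ and $\Gamma(-B)$) that are themselves singular exactly at those parameter values, so its ``makes it clear'' is least clear precisely there. For the spectral identity the divergence is larger: the paper solves \eqref{eq:BandE} for $E$ at $B=m$ and verifies by direct algebra (the $\alpha$-independence of $Q=\alpha y+\frac{2y(y-\alpha+\alpha^2y)}{1-2\alpha y-\sqrt{1-4y^2}}=\frac{1}{2}\big(1+\sqrt{1-4y^2}\big)$) that $(1-2\alpha y)E-\alpha y$ lands exactly on the Bogoliubov energies, whereas you obtain the set equality abstractly from the similarity relation of Definition \eqref{Def2} plus domain control via the M\"obius transform of Proposition \eqref{Transform}. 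Your route is more conceptual and avoids computation; the paper's yields the explicit Bogoliubov formula, which carries the physical content. Two points in your final paragraph should be tightened, though neither is a real gap since the needed facts appear in the paper right after Proposition \eqref{Transform}: (i) disk-analyticity of the transformed generating function alone does not give normalizability (radius of convergence $\geq 1$ does not imply $\ell^2$); you need radius strictly greater than $1$, which holds because the image of $z_-(\alpha)$ and the point $-1/\alpha$ lie strictly outside the closed unit disk; (ii) the claim that the M\"obius map ``preserves analyticity in both directions'' is not automatic but requires the analogous singularity-location estimate for the inverse map $z\mapsto z/(1-\alpha z)$ applied to the pole at $z_-(0)$, which follows from $0<\alpha<1$ and $|z_+(0)|<1$ but should be stated.
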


\begin{proof} We will show in the appendix that there is a singularity of $G_{|\Psi_E\rangle}$ inside the unit disk which is removable under the condition $B\in\mathbb{N}$, $B\ge p$. Let us now show that $(1-2\alpha y)E - \alpha y$ must take values in the Bogoliubov spectrum \eqref{BogSpec2} for $B=m\in\mathbb{N}$. Suppose first that $p=0$. Then equation \eqref{eq:BandE} reads
\begin{equation}
\frac{(E+1)z_+(\alpha)+y_1(\alpha)}{z_+(\alpha)+2y_1(\alpha)} = -m.
\end{equation}
Using equation \eqref{eq:zeros} for $z_+(\alpha)$ and manipulating yields
\begin{equation}
E = \sqrt{1-4y_1(\alpha)y_2(\alpha)}(m+1) - \frac{2y_1(\alpha)y_2(\alpha)}{1-\sqrt{1-4y_1(\alpha)y_2(\alpha)}}.
\end{equation}
Recall the values of $y_1(\alpha) = \frac{y}{1-2\alpha y}$ and $y_2(\alpha) = \frac{y-\alpha+\alpha^2 y}{1-2\alpha y}$. Thus 
\begin{equation}
(1-2\alpha y)E - \alpha y = (m+1)\sqrt{1-4y^2}-\alpha y -\frac{2y(y-\alpha+\alpha^2 y)}{1-2\alpha y -\sqrt{1-4y^2}}.
\end{equation}
Interestingly, we can show that the quantity
\begin{equation}
Q := \alpha y + \frac{2y(y-\alpha+\alpha^2 y)}{1-2\alpha y -\sqrt{1-4y^2}}
\end{equation}
is independent of $\alpha$, which follows by factorizing the numerator. In fact, $Q = \frac{1}{2}\big(1+\sqrt{1-4y^2}\big)$. 
Hence, if $|\Psi_E\rangle$ is the eigenstate of $\mathcal{H}^{(\alpha)}_\mathrm{ab}$  with energy $E$ and generating function $G_{|\Psi_E\rangle}(z)$, then $e^{-\alpha a^\ast b^\ast}|\Psi_E\rangle$ is an eigenstate of Hamiltonian $\mathcal{H}^{(0)}_\mathrm{ab}$ with energy
\begin{equation}
(1-2\alpha y)E -\alpha y  = \frac{1}{2}\sqrt{1-4y^2}(2m+1)-\frac{1}{2}.
\end{equation}
This is exactly the result of Bogoliubov. The case $p\not= 0 $ is similar, so we skip many details. We now solve the equation
\begin{equation}
\frac{(p/2-E-1)z_+(\alpha)+y_1(\alpha)(p-1)}{z_+(\alpha)+2y_1(\alpha)} = m
\end{equation}
which gives
\begin{equation}
(1-2\alpha y)E - \alpha y = \sqrt{1-4y^2}(m-\frac{p}{2})+\frac{1}{2}\sqrt{1-4y^2} - \frac{1}{2}, 
\end{equation}
also in agreement with the Bogoliubov spectrum.\color{black}
\end{proof}

\subsubsection*{Bogoliubov spectrum for $|z_+|\ge1$:}
\label{subsec:escape}
When the generating function $G_{|\Psi_E\rangle}(z)$ which corresponds to eigenstate $|\Psi_E\rangle$ manifests no singularity in the unit disk, and thus the exponent $B$ has no constraint imposed by \eqref{eq:homogenous}, we instead show that $|\Psi_E\rangle\not\in\mathrm{dom}(e^{-\alpha a^\ast b^\ast})$ by analyzing the tranform of the generating function induced by the pair excitation operator.

\begin{proposition}\label{Transform}
Let $0<\alpha\le\alpha_c\le 1$ and $|\Psi_E\rangle\in\mathbb{F}_{ab}$ be a solution to the eigenvalue problem 
$$\mathcal{H}_\mathrm{ab}^{(\alpha)}|\Psi_E\rangle = E|\Psi_E\rangle.$$
Additionally, suppose that $|\Psi_E\rangle\in\mathrm{dom}(e^{-\alpha a^\ast b^\ast})$. If $G_{|\Psi_E\rangle}(z)$ is the generating function associated with $|\Psi_E\rangle$ in definition \eqref{genfunc}, then
$$G_{e^{-\alpha a^\ast b^\ast}|\Psi_E\rangle} (z) = \frac{1}{1+\alpha z}G_{|\Psi_E\rangle}\left(\frac{z}{1+\alpha z}\right).$$
\end{proposition}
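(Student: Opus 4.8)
The plan is to compute directly how the pair-excitation operator $e^{-\alpha a^\ast b^\ast}$ acts on the occupation-number expansion of $|\Psi_E\rangle$ and then re-package the transformed coefficients into the generating-function form of Definition \eqref{genfunc}. Writing $|\Psi_E\rangle = \sum_{s\ge0} c_s |p+s,s\rangle$ with rescaled coefficients $C_s = \sqrt{s!/(p+s)!}\,c_s$, I would first establish the elementary action of $a^\ast b^\ast$ on a basis state. Since $a^\ast b^\ast |p+s,s\rangle = \sqrt{(p+s+1)(s+1)}\,|p+s+1,s+1\rangle$, iterating gives $(a^\ast b^\ast)^m |p+s,s\rangle$ as an explicit multiple of $|p+s+m,s+m\rangle$, the multiplicative factor being a ratio of factorials. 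Expanding $e^{-\alpha a^\ast b^\ast} = \sum_{m\ge0} \frac{(-\alpha)^m}{m!}(a^\ast b^\ast)^m$ and collecting, the coefficient of a given basis vector $|p+t,t\rangle$ in $e^{-\alpha a^\ast b^\ast}|\Psi_E\rangle$ is a finite sum over $s$ with $s \le t$, weighted by $(-\alpha)^{t-s}$ and the appropriate factorial factors.

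The key simplification I expect is that once I pass to the \emph{rescaled} coefficients the factorial clutter cancels cleanly. Concretely, if I denote by $\widetilde{C}_t$ the rescaled coefficient of the transformed state, I anticipate a relation of the shape
\begin{equation*}
\widetilde{C}_t = \sum_{s=0}^{t} C_s\,(-\alpha)^{t-s}\binom{t}{s},
\end{equation*}
i.e. a binomial (Cauchy-type) convolution of the original rescaled coefficients with the sequence $\{(-\alpha)^j\}$. This is precisely the structure already flagged in the paper's discussion of $\mathrm{dom}(\exp(\mathcal P))$, where the transformed coefficients $\tilde c_s = \sum_n c_n(-\alpha)^{s-n}\binom{s}{s-n}$ appear, so I would lean on that computation as a template. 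Verifying that the factorial weights conspire to produce exactly $\binom{t}{s}$ (and no residual $p$-dependence beyond what is absorbed into $C_s$) is the technical heart of the argument.

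Finally I would translate the convolution into the claimed functional identity. The generating function of the transformed state is $G_{e^{-\alpha a^\ast b^\ast}|\Psi_E\rangle}(z) = \sum_t \widetilde{C}_t z^t$, and a binomial convolution of two sequences corresponds to a product of two power series only after a shift; the precise bookkeeping is that $\sum_t \big(\sum_{s\le t} C_s (-\alpha)^{t-s}\binom{t}{s}\big) z^t$ is recognized as the composition $\frac{1}{1+\alpha z} G_{|\Psi_E\rangle}\!\big(\frac{z}{1+\alpha z}\big)$. I would confirm this by expanding $\frac{1}{1+\alpha z}\big(\frac{z}{1+\alpha z}\big)^s = z^s (1+\alpha z)^{-(s+1)}$ via the negative-binomial series $(1+\alpha z)^{-(s+1)} = \sum_{j\ge0}\binom{s+j}{j}(-\alpha)^j z^j$ and matching the coefficient of $z^t$ with the convolution above.

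The main obstacle, and the step I would treat most carefully, is the interchange of the two infinite sums (the series for $e^{-\alpha a^\ast b^\ast}$ and the expansion of $|\Psi_E\rangle$) needed to reorganize terms by the target index $t$. This rearrangement is legitimate exactly under the hypothesis $|\Psi_E\rangle \in \mathrm{dom}(e^{-\alpha a^\ast b^\ast})$, which guarantees that the transformed state has finite norm and hence square-summable coefficients $\{\widetilde{C}_t\}$; this in turn makes $G_{e^{-\alpha a^\ast b^\ast}|\Psi_E\rangle}(z)$ a bona fide analytic function in the unit disk (as noted after Definition \eqref{genfunc}) and justifies treating the manipulations as identities of convergent power series rather than merely formal ones. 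I would therefore state explicitly that the domain hypothesis is what licenses both the reordering and the identification of the two analytic functions on their common disk of convergence.
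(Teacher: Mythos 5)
Your proposal is correct, and its first half coincides with the paper's proof: both derive the binomial-convolution relation $\widetilde{C}_t=\sum_{s\le t}C_s(-\alpha)^{t-s}\binom{t}{s}$ for the rescaled coefficients (and indeed the factorials cancel with no residual $p$-dependence, exactly as you anticipate). Where you genuinely diverge is in how this convolution is identified with the composed function. The paper passes to \emph{exponential} generating functions: by its fact \textbf{(i)} the convolution becomes multiplication of $\mathrm{egf}\{C_m\}(z)$ by $e^{-\alpha z}$, and the M\"obius composition $z\mapsto z/(1+\alpha z)$ with prefactor $1/(1+\alpha z)$ then emerges from a change of variables $\eta=t(\alpha z+1)$ inside the Borel transform $\mathcal{L}'[f](z)=\int_0^\infty f(zt)\,e^{-t}\,dt$ that converts the egf back to the ogf. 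You instead stay entirely at the level of ordinary power series, expanding
\begin{equation*}
\frac{1}{1+\alpha z}\Bigl(\frac{z}{1+\alpha z}\Bigr)^s=z^s\sum_{j\ge0}\binom{s+j}{j}(-\alpha)^jz^j
\end{equation*}
and matching the coefficient of $z^t$ (using $\binom{t}{t-s}=\binom{t}{s}$) against the convolution. Your route is more elementary and self-contained: it avoids introducing the egf/Borel machinery and sidesteps the convergence question for the Laplace-type integral (which the paper does not address, and which requires exponential-type bounds on the egf); you are also more explicit than the paper about which rearrangements need justification and what the domain hypothesis is actually used for. The paper's route, in exchange, ``explains'' structurally why the M\"obius map appears and would adapt more readily to other exponential transforms. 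One small refinement: the coefficient identity itself is a statement about formal power series whose coefficients are finite sums, so it holds without the domain hypothesis; that hypothesis is what makes $e^{-\alpha a^\ast b^\ast}|\Psi_E\rangle$ an element of $\mathbb{F}_{ab}$, hence makes the left-hand generating function a well-defined analytic function on the unit disk in the sense of Definition \eqref{genfunc}, so that the formal identity upgrades to an identity of analytic functions near the origin and then on the common domain by uniqueness of analytic continuation.
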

It is important to note that the new generating function $G_{e^{-\alpha a^\ast b^\ast}|\Psi_E\rangle}(z)$ will have a region of analyticity that is in principle different from that of $G_{|\Psi_E\rangle}(z)$.
\begin{proof}
We introduce a some new notation that will aid in the proof. Given a sequence $\{a_s\}_{s=0}^\infty$, its \textit{ordinary generating function}, denoted $\mathrm{ogf}\{a_s\}(z)$ is a formal power series defined by 
$$\mathrm{ogf}\{a_s\}(z):= \sum_{s=0}^\infty{a_s z^s}.$$
Thus, $G_{|\Psi_E\rangle}(z)= \mathrm{ogf}\{C_s\}$ where $\{C_s\}_{s=0}^\infty$ are the rescaled coefficients of the expansion of $|\Psi_E\rangle$ in the particle number basis (definition \eqref{genfunc}).

The \textit{exponential generating function} for the same sequence is defined as
$$\mathrm{egf}\left\{a_s\right\}(z):=\sum_{s=0}^\infty{a_s\frac{z^s}{s!}}.$$
The following facts hold for generating functions: \textbf{(i)} The product of exponential generating functions can be written
$$\mathrm{egf}\left\{c_m\right\}(z)\cdot\mathrm{egf}\left\{d_m\right\}(z)=\mathrm{egf}\left\{\sum_{s=0}^m{c_m d_{m-s}\binom{m}{s}}\right\}(z).$$
The proof of this is direct. \textbf{(ii)} we can convert between ordinary and exponential generating functions by means of the \textit{Borel transform}, which is defined for an analytic function $f(z)$ by
$$\mathcal{L}'[f](z):=\int_{0}^\infty{f(zt)e^{-t}dt}.$$
The relevant fact for generating functions is
$$\mathcal{L}'\big[\mathrm{egf}\{d_m\}\big](z) = \mathrm{ogf}\{d_m\}(z).$$ Proof of this property is also direct and follows by noting that for $f(z)=\sum_{m=0}^\infty{f_m z^m}$, the action of $\mathcal{L}'$ on $f$ results in the formal multiplication of coefficient $f_m$ by $m!.$ 

We return to the generating function $G_{|\Psi_E\rangle}(z) = \mathrm{ogf}\{C_s\}$. Consider first the case $p=0$, so that $|\Psi_E\rangle$ lies in $\mathrm{span}\{|s,s\rangle\}_{s=0}^\infty$, and the rescaled coefficients of definition \eqref{genfunc} are the coefficients of $|\Psi_E\rangle$ in the particle number basis, $C_s=c_s$. Then by the assumption that $|\Psi_E\rangle \in\mathrm{dom}(e^{-\alpha a^\ast b^\ast})$, 
$$e^{-\alpha a^*b^*}|\Psi_E\rangle=\sum_{m=0}^\infty{\left(\sum_{s=0}^m{c_s \alpha^{m-s}\binom{m}{s}}\right)|m,m\rangle},$$
and so 
\begin{equation}\label{eq:ogf}
G_{e^{-\alpha a^\ast b^\ast}|\Psi_E\rangle}(z) = \mathrm{ogf}\left\{\sum_{s=0}^m{c_s(-\alpha)^{m-s}\binom{m}{s}}\right\}(z).
\end{equation}
Fact \textbf{(i)} shows that 
$$\mathrm{egf}\left\{\sum_{s=0}^m{c_s(-\alpha)^{m-s}\binom{m}{s}}\right\}(z)= \mathrm{egf}\left\{c_m\right\}(z)\cdot\mathrm{egf}\left\{(-\alpha)^m\right\}(z)=\mathrm{egf}\left\{c_m\right\}(z)\cdot e^{-\alpha z}.$$ 
while by fact \textbf{(ii)},
\begin{align*}
G_{e^{-\alpha a^\ast b^\ast }|\Psi_E\rangle}(z)&=\mathrm{ogf}\left\{\sum_{s=0}^m{c_s(-\alpha)^{m-s}\binom{m}{s}}\right\} = \mathcal{L}'[e^{-\alpha z}\mathrm{egf}\left\{c_m\right\}](z)\\
&= \int_{t=0}^\infty{\mathrm{egf}\{c_m\}(tz)e^{-t(\alpha z+1)}dt} \\
&= \frac{1}{1+\alpha z}\int_{\eta=0}^\infty{\mathrm{egf}\{c_m\}\left(\frac{z\eta}{\alpha z+1}\right)e^{-(\eta)}d\eta} \quad(\mathrm{for}\,\,\eta:= t(\alpha z+1))\\
&= \frac{1}{1+\alpha z}\mathcal{L}'[\mathrm{egf}\{c_m\}]\left(\frac{z}{1+\alpha z}\right) \\
&= \frac{1}{1+\alpha z}\mathrm{ogf}\{c_m\}\left(\frac{z}{1+\alpha z}\right) \\
&= \frac{1}{1+\alpha z}G_\Psi\left(\frac{z}{1+\alpha z}\right).
\end{align*}
This shows the proof for $p=0$. When $p\not=0$, the transformed state vector reads 
\begin{equation}
e^{-\alpha a^*b^*}|\Psi_E\rangle=\sum_{m=0}^\infty{\sqrt{\frac{(p+m)!}{m!}}\sum_{s=0}^m{c_s(-\alpha)^{m-s}\binom{m}{s}}|m+p,m\rangle}.
\end{equation}
It is straightforward to verify that the above computations go through using the rescaled coefficients $\{C_m\}_{m=0}^\infty$ and the generating functions
\begin{equation}
\begin{split}
G_{|\Psi_E\rangle}(z) &= \sum_{m=0}^\infty{C_m z^m}\quad\mathrm{and}\\
G_{e^{-\alpha a^\ast b^\ast}|\Psi_E\rangle}(z) &=\sum_{m=0}^\infty{\left(\sum_{s=0}^m{C_s(-\alpha)^{m-s}\binom{m}{s}}\right)z^m}.
\end{split}
\end{equation}
This concludes the proof.\color{black}
\end{proof}

We finally use Proposition \eqref{Transform} to recover the Bogoliubov spectrum in the case $|z_+(\alpha)|>1$.
Indeed, if the state $e^{-\alpha a^*b^*}|\Psi_E\rangle$ exists, it must have generating function $\frac{1}{1+\alpha z}G_{|\Psi_E\rangle}\left(\frac{z}{1+\alpha z}\right)$. The singularity of $G_{|\Psi_E\rangle}(z)$ at $z=z_+(\alpha)$ is therefore transformed to a singularity at $\frac{z}{1+\alpha z} = z_+(\alpha)$ of $G_{e^{-\alpha a^\ast b^\ast}|\Psi_E\rangle}(z)$, i.e.,
$$z=\frac{z_+(\alpha)}{\big(1-\alpha z_+(\alpha)\big)}.$$ 
Therefore, if $\Big|\frac{z_+(\alpha)}{1-\alpha z_+(\alpha)}\Big|<1$ it will be necessary that $B\in\mathbb{N}$ by similar reasoning as before. In addition to this singularity, $G_{e^{-\alpha a^\ast b^\ast}|\Psi_E\rangle}(z)$ has a removable singularity at $z = -1/\alpha$.

Since $z_+(\alpha)<0$ and $\alpha>0$ this is the same as showing that $\frac{1}{|z_+(\alpha)|}+\alpha >1$ or $|z_+(\alpha)|<\frac{1}{1-\alpha}$.  Indeed, recall
\begin{equation}
z_+(\alpha)=\frac{1}{2}\frac{(-1+2\alpha y+\sqrt{1-4y^2})}{(y-\alpha+\alpha^2y)}
\end{equation}
and 
$$y-\alpha+\alpha^2y=y\left(\alpha-\frac{1-\sqrt{1-4y^2}}{2y}\right)\left(\alpha-\frac{1+\sqrt{1-4y^2}}{2y}\right),$$ 
so that 
\begin{equation}
z_+(\alpha)=\left(\alpha-\frac{1+\sqrt{1-4y^2}}{2y}\right)^{-1}.
\end{equation}
For $0\le y\le\frac{1}{2}$ we have $\frac{1+\sqrt{1-4y^2}}{2y}\ge1$, hence 
\begin{equation}
|z_+(\alpha)|=\left(\left(\frac{1+\sqrt{1-4y^2}}{2y}\right)-\alpha\right)^{-1}\le\frac{1}{1-\alpha}.
\end{equation}
The zero $z_+(\alpha)$ therefore lies inside the unit disk.

Recall the relation for $E$ in terms of the exponent $B$
\begin{equation}\label{expB}
E+1-\frac{p}{2} = B+ \frac{y_1(\alpha)}{z_+(\alpha)}(2B+p-1).
\end{equation}
Using the expression for $z_+(\alpha)$ just derived and $y_1(\alpha) = \frac{y}{1-2\alpha y}$, we get
\begin{equation}
\frac{y_1(\alpha)}{z_+(\alpha)} = \frac{1}{2}\big(-1+\sqrt{1-4y^2}\big),
\end{equation}
and so 
\begin{equation}\label{E}
E = -\frac{1}{2}+\sqrt{1+4y^2}\Big(B+\frac{p-1}{2}\Big).
\end{equation}
In the case that $\alpha = \alpha_c$, we have $y_1(\alpha_c)=0$ and so by \eqref{expB}
$$E = B -1 +\frac{p}{2},$$
Thus the restriction $B\in\mathbb{N}$ yields half integer energies for $\mathcal{H}^{(\alpha = \alpha_c)}_\mathrm{ab}$. This is unique to choosing the value $\alpha = \alpha_c$; the spectrum of $\mathcal{H}^{(\alpha_c)}_\mathrm{ab}$ corresponding to transforms of LHY states is the same as the spectrum of $\frac{1}{2}(a^\ast a+ b^\ast b)$. Otherwise, the energy is given by \eqref{E}.

Note that the singularity due to $z_-(\alpha)$ remains outside the disk under the transformation $|\Psi_E\rangle \mapsto e^{-\alpha a^\ast b^\ast}|\Psi_E\rangle$. Indeed, 
\begin{equation}
z_-(\alpha) =\left(\alpha-\frac{1-\sqrt{1-4y^2}}{2y}\right)^{-1},
\end{equation}
and $\alpha\le\frac{1-\sqrt{1-4y^2}}{2y}\le 1$, so $\frac{1}{|z_-(\alpha)|}=|\alpha-\frac{1-\sqrt{1-4y^2}}{2y}|<1-\alpha$. This shows that the singularity due to $z_-(\alpha)$ does not affect the analyticity of $G_{e^{-\alpha a^\ast b^\ast}|\Psi_E\rangle}(z)$ inside the unit disk. 

\section{Completeness of the states $\{e^{(-\alpha_c a^\ast b^\ast)}|\Psi_{p,N}\rangle\}_{p,N\in\mathbb{N}}$}\label{completeness}
We now prove the completeness of eigenstates $\{e^{(-\alpha_c) a^\ast b^\ast}|\Psi_{p,N}\rangle\}_{p,N\in\mathbb{N}}$ in the Fock space $\mathbb{F}_{ab}$ for critical parameter $\alpha_c$. 
The completeness of these states is not obvious, since \textbf{(i)} The operator $e^{(-\alpha_c) a^\ast b^\ast}$ is unbounded, so the relation $$\mathcal{H}^{(\alpha_c)}_\mathrm{ab} = \frac{1}{1-2(\alpha_c) y}\Big(e^{\alpha_c a^\ast b^\ast}\mathcal{H}_\mathrm{ab}e^{(-\alpha_c) a^\ast b^\ast}+\alpha_c y\Big)$$
does not imply that the spectra of the two operators are the same, \textbf{(ii)} It is not clear that our formula \eqref{formula1} recovers the same degeneracy of eigenstates for $\mathcal{H}_\mathrm{ab}$ given in terms of tensor products of quasiparticle operators (a la equation \eqref{quasiparticle}), and \textbf{(iii)} we have determined that infinitely many eigenstates $|\Psi_{p,\Theta}\rangle$ corresponding to $\frac{p}{2}+\Theta\in\sigma(\mathcal{H}_\mathrm{ab}^{\alpha_c})$ and $\tilde{y}\ge 1$ cannot be in the domain of $e^{-\alpha_c a^\ast b^\ast}$.

The completeness result of this section implies that eigenstates of $\mathcal{H}_\mathrm{ab}$ are \textit{exactly} the non-Hermitian transforms of states with $\Theta\in\mathbb{N}$. The following density argument also gives an alternative proof to the claim that the degeneracy of the $E=\frac{p}{2}+N$ eigenspace matches the degeneracy of the momentum eigenstates $|p+N,N\rangle$, or $|N,p+N\rangle$.   



In order to show the completeness of the states $\{e^{-\alpha_c a^\ast b^\ast}|\Psi_{p,N}^{(\pm)}\rangle\}_{p,N\in\mathbb{N}}$ in $\mathbb{F}_{ab}$, it suffices to consider the completeness of states $\{e^{-\alpha_c a^\ast b^\ast}|\Psi_{p,N}\rangle\}_{N=0}^\infty$ in $\mathrm{span}\{|s+p,s\rangle,s\in\mathbb{N}\}$ for fixed $p$.
We make use of the well-known fact that a collection of elements $|\psi_j\rangle$ in a separable Hilbert space is complete if the only element $|f\rangle$ which annihilates every element of $|\psi_j\rangle$, that is $$\langle f |\psi_j\rangle = 0,\quad \forall j$$ is the zero element,  $|f\rangle \equiv 0$ \color{black}\cite{Lyubich}\color{black}.


We therefore fix $p\ge0$ without loss of generality. For $N\ge0$, a straightforward calculation gives: 
\begin{equation}\label{lkj}
e^{(-\alpha_c a^\ast b^\ast)} |\Psi_{p,N}\rangle = \sum_{m=0}^\infty {(-\alpha_c)^m S(m,N)|p+m,m\rangle}.
\end{equation}
for 
$$S(m,N) := \sum_{s=0}^{\min\{m,N\}}{(-\alpha_c \tilde{y})^{-s}\binom{N}{s}\sqrt{\binom{p+s}{s}}^{\,\,-1}\sqrt{\binom{p+m}{p+s}\binom{m}{s}}}.$$
If $|f\rangle$ is an arbitrary vector in $\mathbb{F}_\mathrm{ab}$ with expansion $|f\rangle:=\sum_{m}{d_m |p+m,m\rangle}$ and coefficients $\{d_m\}_{m=0}^\infty\in\ell^2$, the inner product of $|f\rangle$ with the state \eqref{lkj} reads:
\begin{equation}\label{eq:FinnerProd}
\langle f|e^{-\alpha_c a^*b^*} \Psi_{p,N}\rangle=\sum_{m}{\overline{d_m}(-\alpha_c)^m S(m,N)}.
\end{equation}

\begin{theorem}\label{Density}(Density of states for critical parameter)
Let $\alpha =\alpha_c =  \frac{1-\sqrt{1-4y^2}}{2y}$. 
Then the collection 
$$\{e^{-\alpha_c a^*b^*} |\Psi^{(\pm)}_{p,N}\rangle\}_{p\in\mathbb{Z},N\in\mathbb{N}}$$
is complete in the two-particle bosonic Fock space $\mathbb{F}_{ab}$.  
\end{theorem}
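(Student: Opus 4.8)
The plan is to apply the annihilator criterion for completeness quoted just before the statement: fixing $p\ge 0$ and working in the sector $\mathbb{F}^{(+p)}=\mathrm{span}\{|p+m,m\rangle\}$ (the $(-)$ family is handled identically, and since the sectors $\mathbb{F}^{(p)}$ are mutually orthogonal and preserved by $e^{-\alpha_c a^*b^*}$, completeness in each fixed-$p$ sector yields the full claim), I would take an arbitrary $|f\rangle=\sum_m d_m|p+m,m\rangle$ with $\{d_m\}\in\ell^2$ satisfying $\langle f\,|\,e^{-\alpha_c a^*b^*}\Psi_{p,N}\rangle=0$ for every $N\in\mathbb{N}$, and aim to conclude $d_m=0$ for all $m$. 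Using \eqref{eq:FinnerProd} together with the explicit form of $S(m,N)$, the first step is purely algebraic: simplify the product of binomial coefficients to $\sqrt{\binom{p+s}{s}}^{-1}\sqrt{\binom{p+m}{p+s}\binom{m}{s}}=\sqrt{p!(p+m)!\,m!}\,\big/\,\big((p+s)!(m-s)!\big)$, so that each orthogonality relation $(*)_N$ becomes $\sum_m \overline{d_m}(-\alpha_c)^m\sqrt{p!(p+m)!\,m!}\sum_{s=0}^{\min\{m,N\}}\frac{(-\alpha_c\tilde{y})^{-s}\binom{N}{s}}{(p+s)!(m-s)!}=0$.

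Next I would collapse the family of relations indexed by $N$ into a single transparent condition. The clean way is to apply the finite forward-difference operator $\Delta^s$ in the variable $N$ at $N=0$; since $\Delta^s\binom{N}{s'}\big|_{N=0}=\delta_{s,s'}$, this is a finite linear combination of $(*)_0,\dots,(*)_s$ that isolates a single $s$, yielding for every $s\ge0$ the identity $\sum_{r\ge0}\overline{d_{s+r}}\,\frac{(-\alpha_c)^r}{r!}\sqrt{(p+s+r)!\,(s+r)!}=0$ after dividing out nonzero constants. (Equivalently one may form the exponential generating function $\sum_N (\cdot)\,w^N/N!$ and use $\sum_N\binom{N}{s}w^N/N! = w^s e^w/s!$ to read off coefficients of $w^s$; the finite-difference route has the advantage of needing only finite combinations of the hypotheses.) Introducing $a_m:=\overline{d_m}\sqrt{(p+m)!\,m!}$ and the power series $g(x):=\sum_m a_m x^m/m!=\sum_m \overline{d_m}\sqrt{(p+m)!/m!}\,x^m$, each of these identities reads exactly $g^{(s)}(-\alpha_c)=0$.

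It then remains to argue that $g$ is a genuine analytic function near $-\alpha_c$, so that the vanishing of all its derivatives there is fatal. The essential observation is that $(p+m)!/m!=\prod_{j=1}^{p}(m+j)\le (m+p)^p$, so the Maclaurin coefficient $\overline{d_m}\sqrt{(p+m)!/m!}$ grows at most polynomially relative to $d_m$; since $\{d_m\}\in\ell^2$ is bounded, $\limsup_m |\,\overline{d_m}\sqrt{(p+m)!/m!}\,|^{1/m}\le 1$, and hence $g$ is analytic on the open unit disk $\{|x|<1\}$. Because $\alpha_c=\frac{1-\sqrt{1-4y^2}}{2y}\in(0,1)$ for $0<y<1/2$, the point $-\alpha_c$ lies strictly inside this disk. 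The identity theorem then forces $g\equiv0$ on the disk, so every coefficient vanishes, $d_m=0$ for all $m$, and thus $|f\rangle=0$, proving completeness.

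I expect the main obstacle to be this last analyticity bookkeeping rather than the algebra: one must confirm that $e^{-\alpha_c a^*b^*}\Psi_{p,N}$ is genuinely a finite-norm vector (so the inner products $(*)_N$ are well-defined convergent series), and verify that the radius of convergence of $g$ is at least $1$ while $\alpha_c<1$ strictly. It is precisely the strict inequality $\alpha_c<1$ that places the evaluation point inside the domain of analyticity and makes ``all derivatives vanish'' collapse to $g\equiv0$; and the finite-difference reduction is what lets me avoid any delicate interchange of the $\sum_N$ and $\sum_m$ summations.
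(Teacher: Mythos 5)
Your proof is correct, and it takes a genuinely more elementary route than the paper's. Both arguments start from the same annihilator criterion, reduce to a fixed-$p$ sector, and share the same analytic endgame: a function analytic on the unit disk whose derivatives all vanish at $-\alpha_c$, with $\alpha_c<1$, must vanish identically. The difference is in how the $N$-indexed family of orthogonality relations is converted into the vanishing of all derivatives at one point. The paper promotes the inner products to analytic functions $f_N(z):=\langle f|e^{za^*b^*}\Psi_{p,N}\rangle$, recognizes the coefficients as hypergeometric values $F(-m,-N,p+1;1/z)$, and invokes the contiguous relation (Property (i), proved in the appendix) to write $\frac{d}{dz}f_N$ as a combination of $f_{N-1},f_N,f_{N+1}$; the point evaluations $f_N(-\alpha_c)=0$ then propagate to all $z$-derivatives of every $f_N$. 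You never leave the point $-\alpha_c$: after simplifying $S(m,N)$ you invert the binomial transform in $N$ (your $\Delta^s\big|_{N=0}$ step), showing the hypotheses are \emph{equivalent} to $g^{(s)}(-\alpha_c)=0$ for all $s$, where your $g$ is exactly the paper's $f_0$ up to the constant factor $\sqrt{p!}$. Your route dispenses with the special-function identities entirely, uses only finite linear combinations of the hypotheses (so absolute convergence of each single series, guaranteed by $\alpha_c<1$ and the polynomial bound $(p+m)!/m!\le(p+m)^p$, is all that needs checking), and makes the paper's closing change of variables to remove the restriction $\tilde y=1$ unnecessary, since $\tilde y$ enters only through the constants $(-\alpha_c\tilde y)^{-s}$ that are divided out. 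What the paper's route buys in exchange is structural: the family $f_N(z)$ and its hypergeometric representation tie the completeness proof to the generating-function formalism of Propositions \ref{genfuncfacts} and \ref{Transform}. The two loose ends you flag are indeed fine: $e^{-\alpha_c a^*b^*}|\Psi_{p,N}\rangle$ has finite norm because its $m$-th coefficient is $O\bigl(m^{N+p/2}\alpha_c^m\bigr)$, and $\alpha_c=\sqrt{\bigl(1-\sqrt{1-4y^2}\bigr)/\bigl(1+\sqrt{1-4y^2}\bigr)}<1$ strictly for $0<y<1/2$.
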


\begin{proof}
As described in the remark, it suffices to prove the statement for fixed $p$, and $|f\rangle$ having an expansion of the form above. Using equation \eqref{eq:FinnerProd} to write the inner product $\langle f|e^{-\alpha_c a^*b^*} \Psi_{p,N}\rangle$, the claim is that the infinite linear system for $\{d_m\}_{m=0}^\infty\in\ell^2$ which corresponds to the problem $\langle f|e^{(-\alpha_c a^\ast b^\ast)} \Psi_{p,N}\rangle=0\quad\forall N$ has only the trivial solution $\{d_m = 0\}_{m=0}^\infty\in\ell^2$. 

Summarizing the steps of the proof: \textbf{(i)} The family of functions defined by $f_N(z):=\langle f|e^{z a^*b^*}\Psi_{p,N}\rangle$ are shown to be analytic for $z$ in the unit disk. The main assumption, namely, $\langle f| e^{(-\alpha_c) a^\ast b^\ast}\Psi_{p,N}\rangle=0$ for all $N$, translates to the collection of point evaluatons $f_N(-\alpha_c)=0$ $\forall N$. \textbf{(ii)}  The derivatives $\frac{d^n}{dz^n}f_N(z)$ are written as linear combinations of $f_{N+j}(z)$ for $0\le j\le N+n$. Therefore the restriction $f_N(-\alpha_c)=0$ for all $N$ translates to the restriction on all derivatives, $\frac{d^n}{dz^n}f_N(-\alpha_c) = 0$ for $n,N\ge0$. \textbf{(iii)} Since $f_N(z)$ is analytic in the unit disk and $|\alpha_c|<1$ we conclude that $f_N(z)\equiv0$ for all $N$, which gives the proof.

Assume first $\tilde{y}=1$ (this condition will be removed later). Define the family of functions $f_N:\mathbb{C}\to\mathbb{C},N\in\mathbb{N}$ by making the substitution $-\alpha\mapsto z$ in \eqref{eq:FinnerProd}, i.e., 
\begin{equation}\label{eq:Nfamily}
\begin{split}
 f_N(z) &:= \sum_{m=0}^\infty{\overline{d_m}z^m\Bigg[\sum_{s=0}^{\min\{m,N\}}{z^{-s}\binom{N}{s}\sqrt{\binom{p+s}{s}}^{-1}\sqrt{\binom{p+m}{p+s}\binom{m}{s}}}\Bigg]} \\
 & = \sum_{m=0}^\infty{\overline{d_m}z^m\sqrt{\binom{p+m}{m}}\Bigg[\sum_{s=0}^{\min\{m,N\}}{z^{-s}\binom{N}{s}\binom{m}{s}\binom{p+s}{s}^{-1}}\Bigg]} \\
 & = \sum_{m=0}^\infty{\overline{d_m}z^m\sqrt{\binom{p+m}{m}}F\left(-m,-N,p+1;\frac{1}{z}\right)}.
\end{split}
\end{equation}

Here $F(a,b,c;z)$ is Gauss's Hypergeometric function \color{black}\cite{Andrews}\color{black}~ which is defined (using the Pochhammer symbols) by the power series 
$$F(a,b,c;z)\equiv\sum_{m=0}^\infty{\frac{z^m}{m!}\frac{(a)_m(b)_m}{(c)_m}},\quad a,b,c\in\mathbb{R},\quad c\not=0,\quad|z|<1,$$ and equation \eqref{eq:Nfamily} is a consequence of the factorization 
$$ \binom{p+m}{p+s}=\binom{m}{s}\binom{p+s}{s}^{-1}\binom{p+m}{m}.$$ 
Note that $f_0(z)\equiv\sum_{m=0}^\infty{\overline{c_m}\sqrt{\binom{p+m}{m}}z^m} $ is consistent with this definition since $F(\cdot,0,\cdot,z)=1$. It is easy to verify that the series defining $f_N(z)$ is convergent (and therefore $f_N$ is analytic) for $|z|<1$.

Now assume that $\langle f|e^{-\alpha_c a^\ast b^\ast} \Psi_{p,N}\rangle=0\quad\forall N$, which translates to the point evaluation
\begin{equation}\label{eq:ptEval}
f_N\big(-\alpha_c\big)=0\quad\forall N\ge0.
\end{equation}
The remainder of the proof describes how this condition implies $f_N(z)\equiv0\quad\forall N$ and hence $d_m\equiv 0$ for all $m$. We make use of the following relations for the Hypergeometric function $F(a,b,c;z)$ (see appendix for direct proofs):

\textbf{Property (i)} \textbf{(contiguous relation for the hypergeometric function)} For $p,N\ge 0$ 
\begin{equation}
\begin{split}
(mz)F\big(-m+1,-N,p+1;z\big) &= -(p+1+2N)F\big(-m,-N,p+1;z\big) \\
&+ (p+1+N)F\big(-m,-N-1,p+1;z\big) \\
&+ NF\big(-m,-N+1,p+1;z\big) 
\end{split}
\end{equation}
with the exceptional case $N=0$:
 \begin{equation}\label{eq:exception}
(mz)F\big(-m+1,0,p+1,z\big)=-(p+1)F\big(-m,0,p+1,z\big)+(p+1)F\big(-m,-1,p+1,z\big).
\end{equation}

Note: For clarity, we will use the conventional shorthand for the contiguous hypergeometric functions when the parameters $a,b,c$ are clear from the context:
\begin{equation}
F=F(a,b,c;z),\quad F(a\pm)=F(a\pm1,b,c;z),
\end{equation}
with similar definitions holding for $F(b\pm),\,\,F(c\pm)$.

\textbf{Property (ii)}  For $a<0,z\in\mathbb{C}$ 
\begin{equation}
\frac{d}{dz}{z^aF\big(a,b,c,\frac{1}{z}}\big)=az^{a-1}F(a+1,b,c,\frac{1}{z}).
\end{equation}

Using formulas \textbf{(i)} and \textbf{(ii)} we now show that $\frac{d^n}{dz^n}{f_N(z)}$ is a finite linear combination of $f_0(z),f_1(z),\dots,f_{N+n}(z)$ for every $n$. Since we have $f_n(-\alpha_c)=0$ for all $n$, i.e., equation \eqref{eq:ptEval}, this shows that $\frac{d^n}{dz^n}f_N(-\alpha_c)=0$ for all $n$. 

Indeed, 
\begin{equation} \label{fNderivative}
\begin{split}
\frac{d}{dz}{f_N^{(p)}}(z) &= \sum_{m=0}^\infty{\sqrt{\binom{p+m}{m}}\overline{d_m}\frac{d}{dz}\Big[z^mF\left(-m,-N,p+1,\frac{1}{z}\right)\Big]} \\
&= \sum_{m=0}^\infty{\sqrt{\binom{p+m}{m}}\overline{d_m}\Big[mz^{m-1}F\left(-m+1,-N,p+1,\frac{1}{z}\right)\Big]} \\
&= \sum_{m=0}^\infty{\sqrt{\binom{p+m}{m}}\overline{d_m}z^m\Big[m\frac{1}{z}F\left(-m+1,-N,p+1,\frac{1}{z}\right)\Big]} \\ 
&= \sum_{m=0}^\infty{\sqrt{\binom{p+m}{m}}\overline{d_m}z^m\Big[-(p+1+2N)F+(p+1+N)F(-N-1)+NF(-N+1)\Big]} \\
&= -(p+1+2N)f_N^{(p)}(z)+(p+1+N)f_{N+1}^{(p)}(z)+Nf_{N-1}^{(p)}(z).
\end{split}
\end{equation} 
Iterating this same procedure, it is clear that we can write a corresponding formula for any number of derivatives of $f_N$. This formula is written as a vector equation for convenience:
$$\frac{d^n}{dz^n}f_N^{(p)}(z) = \mathbf{L}\big(f^{(p)}_0(z),\dots,f^{(p)}_N(z),f^{(p)}_{N+1}(z),\dots,f^{(p)}_{N+n}(z)\big)^T,$$
where $\mathbf{L}$ is a $1\times(N+n+1)$ dimensional vector. The precise entries of $\mathbf{L}$ are irrelevant to us, except for the fact that they generally involve the constants $p,N$ but not $\alpha_c$. We should remark on the derivative of $f^{(p)}_0(z)$. In this case, equation \eqref{eq:exception} is used and Formula \textbf{(ii)} becomes trivial, so $\frac{d^n}{dz^n}f^{(p)}_0(z)$ will be a linear combination of the functions $f_0^{(p)},f_1^{(p)},\dots,f_n^{(p)}$.

Collecting like powers of $z$ in $f_N(z)$ gives
\begin{equation}
f_N(z) 
=\sum_{q=0}^{\infty}{\Bigg[\sum_{m=q}^{q+N}{\overline{d_m}\binom{N}{m-q}\binom{m}{q}}\Bigg]z^q},\quad\mathrm{for}\quad q=m-s,
\end{equation}
so indeed, $f_N\equiv 0$ implies that $d_m=0$ for all $m=0,1,2,\dots$ . 

Finally, the restriction $\tilde{y}=1$ is now removed via a change of variables. We show this for $p=0$ for clarity -- the other cases follow analogously. The function $f_N(z)$ is now defined by
\begin{equation}
f_N(z):=\sum_{m=0}^\infty{\overline{d_m}z^mF\Big(-m,-N,1,\frac{1}{\tilde{y}z}\Big)},\quad \tilde{y}\not=0.
\end{equation} 
We now have 
\begin{equation}
\frac{d}{dz}f_N(z)=\tilde{y}\sum_{m=0}^\infty{\overline{d_m}z^m\Big[m\frac{1}{\tilde{y}z}F\big(-m+1,-N,1,\frac{1}{\tilde{y}z}\big)\Big]},
\end{equation}
and formula \textbf{(i)} again applies to give 
\begin{equation}
\frac{d}{dz}f_N(z)=\tilde{y}\Big[-(1+2N)f_N(z)+(p+1)f_{N+1}(z)+Nf_{N-1}(z)\Big],
\end{equation} 
Following the above reasoning we conclude that $d_m=0$ for $m=0,1,2,\dots$ and the proof of density is complete.
\end{proof}

\section{A related particle-conserving Hamiltonian} \label{PC Ham}
We now briefly describe the construction of eigenstates for the particle-conserving approximation \eqref{H particle conserving}, i.e., 
$$
\mathcal{H}\approx 4\pi a \rho N + \sum_{k\in\mathbb{Z}^3_L, k\not=0}{\big\{k^2 +8\pi a\rho \big\}a_k^\ast a_k} + \frac{4\pi a}{|B_L|}\sum_{k\in\mathbb{Z}^3_L, k\not=0}{\big\{(a_0^\ast)^2 a_k a_{-k} + a_k^\ast a_{-k}^\ast (a_0)^2\big\}}.
$$
Since Wu describes a Hamiltonian which reduces to this operator in the periodic setting \color{black}\cite{Wu61,Wu98}\color{black}, we will refer to this approximation as ~$\mathcal{H}_\mathrm{Wu}$. The eigenstates of $\mathcal{H}_\mathrm{Wu}$ in the Fock space $\mathbb{F}_N$ will exhibit a structure similar to the states derived in Section \eqref{sec:construction}. The transformation in this setting is  $\exp(\mathcal{W})\mathcal{H}_\mathrm{Wu}\exp(-\mathcal{W})$, where the operator $\mathcal{W}$ is defined via
\begin{equation}
\mathcal{W} := \frac{1}{N}\mathcal{P}a_0^2.
\end{equation}
On the $N-$particle fiber, $\mathbb{F}_N$, the operators $\mathcal{W}$ and $\exp(-\mathcal{W})$ are bounded; we can therefore assert the equivalence
$$\sigma(\mathcal{H}_\mathrm{Wu}) = \sigma\big(\exp(\mathcal{W})\mathcal{H}_\mathrm{Wu}\exp(-\mathcal{W})\big).$$
The conjugations of the momentum basis operators with $\exp(\mathcal{W})$ are now more complicated than the conjugations with $\exp(\mathcal{P})$ fo equation \eqref{conjugation0}:
\begin{equation}\label{conjugation}
\begin{split}
\exp(\mathcal{W}) & a_0 \exp(-\mathcal{W}) = a_0,\\
\exp(\mathcal{W}) & a^\ast_k \exp(-\mathcal{W}) = a_k^\ast, \\
\exp(\mathcal{W}) & a_k \exp(-\mathcal{W}) = a_k +\frac{\alpha(k)}{N} a^\ast_{-k} a_0^2,  \\
\exp(\mathcal{W}) & a^\ast_0 \exp(-\mathcal{W}) 
= a_0^\ast + 2\mathcal{P} a_0.
\end{split}
\end{equation}
\color{black}
In particular, the transformed approximate Hamiltonian $\exp(\mathcal{W})\mathcal{H}_\mathrm{Wu}\exp(-\mathcal{W})$ will contain cubic and quartic terms in momentum state creation/annihilation operators; these terms must be dropped on the basis of steps \textbf{(i)} and \textbf{(ii)} of the approximation scheme in Section \eqref{sec:LY-H}. The result of \eqref{conjugation} and dropping the terms just described is
\begin{equation}
\begin{split}
\exp(\mathcal{W})\mathcal{H}_\mathrm{Wu} \exp&(-\mathcal{W}) \approx 4\pi a\rho N + 4\pi a\rho \sum_{k\in\mathbb{Z}^3_L}{\alpha(k)} + \frac{4\pi a}{|B_L|}\sum_{k\in\mathbb{Z}^3_L}{a_k a_{-k}(a^\ast_0)^2}\\
&+ \sum_{k\in\mathbb{Z}^3_L}{\Big(k^2+8\pi a \rho + 8\pi a \rho \alpha(k)\Big)(a_k^\ast a_k)} \\
&+ \sum_{k\in\mathbb{Z}^3_L}{\Big(\big(k^2+8\pi a\rho\big)\alpha(k)+4\pi a\rho + 4\pi a \rho \big(\alpha(k)\big)^2\Big)a_k^\ast a_{-k}^\ast \frac{a_0^2}{N} }.\\
\end{split}\color{black}
\end{equation}
The last line of this approximation, which contains the terms proportional to $(a^\ast_k a^\ast_{-k})$, will again vanish provided that equation \eqref{alphak} holds for $\alpha(k)$, which yields
\begin{equation} \label{TransformedWu}
\begin{split}
\exp(\mathcal{W})&\mathcal{H}_\mathrm{Wu} \exp(-\mathcal{W}) \approx  4\pi a\rho N + 4\pi a\rho \sum_{k\in\mathbb{Z}^3_L}{\alpha(k)} \\
&+\sum_{k\in\mathbb{Z}^3_L}{\Big(k \sqrt{k^2+16\pi a\rho }\Big)\big(a_k^\ast a_k \big)} + \frac{4\pi a}{|B_L|}\sum_{k\in\mathbb{Z}^3_L}{a_k a_{-k}(a^\ast_0)^2}.
\end{split}
\end{equation}
As before, we consider eigenstates $|\Psi(k)\rangle\in\mathbb{F}_N$ which are linear combinations of tensor product states containing only the momenta $(k,-k)$, and which solve the equation
\begin{equation}\label{WuEigenvalue}
\exp(\mathcal{W})\mathcal{H}_\mathrm{Wu} \exp(-\mathcal{W})|\Psi(k)\rangle = E|\Psi(k)\rangle.
\end{equation}
For this purpose, states $|\Psi\rangle\in\mathbb{F}_N$ are most succinctly described as vectors : 
\begin{equation}
|\Psi\rangle = \Big(\Psi_0,\Psi_1, \dots,\Psi_n,\dots,\Psi_{N-1},\Psi_N\Big)^T;
\end{equation} 
the component $\Psi_n$ in this vector refers to the symmetric tensor product that contains $N-n$ particles in the condensate $e_0(x)$, and $n$ particles in a state orthogonal to the condensate, denoted $\Psi
^{\perp}_n\in \phi^\perp(\mathbb{R}^{3n})$, viz.,  $$\Psi_n:=\Psi^{\perp}_n\otimes_{s}(e_0)^{\otimes_s N-n}.$$ 
In $|\Psi(k)\rangle$, we assume that the component $\Psi_n^\perp$ is a linear combination of momentum states $e_k(x)$ and $e_{-k}(x)$ for every $n$. In this notation, the eigenvalue problem \eqref{WuEigenvalue} translates to an upper triangular matrix eigenvalue problem for the vectors $(\Psi_0,\dots,\Psi_N)^T$. This is because the terms in the transformed Hamiltonian \eqref{TransformedWu} that correspond to momentum $k\in\mathbb{Z}^+_L$ either: \textbf{(a)} transform a component $\Psi_n$ to  $\widetilde{\Psi}_n$ with the same number of particles in the condensate, or \textbf{(b)} transform $\Psi_n$ to a $\widetilde{\Psi}_{n-2}$, with 2 additional particles in the condensate. 

The eigenvalues $E$ of this upper-triangular matrix equation are equal to the diagonal elements of the matrix, which are
\begin{equation}
E = \big(k\sqrt{k^2+16\pi a\rho}\big)n,\quad n = 0,1,\dots,N-1, N.
\end{equation}
These are energies of the diagonal operator $k\sqrt{k^2+16\pi a\rho}\big(a_k^\ast a_k + a^\ast_{-k}a_{-k}\big)$ on $\mathbb{F}_N$. The off-diagonal terms of \eqref{TransformedWu} couple states of the form $\Psi^\perp_n\otimes_s (e_0)^{\otimes_s N-n}$ to states $\Psi^\perp_{n-2}\otimes_s (e_0)^{\otimes_s N-n+2}$ where $\Psi^\perp_{n-2}$ is the result of annihilating one particle with momentum $k$ and one particle with momentum $-k$ from $\Psi^\perp_n$. 
Using the integer $p$ as before to denote the difference in the particle number between states with momentum $k$ and those with momentum $-k$, we can write the (non-normalized) eigenstates as 
$$
|\Psi_{p,N}(k)\rangle := \sum_{s=0}^{(N-p)/2}{\tilde{y}(k)^{-s}\binom{N}{s}\sqrt{\binom{p+s}{s}\binom{2N}{2s}\frac{1}{2s!}}^{\,\,-1}|s+p,s\rangle_{\mathbb{F}_N}},
$$
for $$ |s+p,s\rangle_{\mathbb{F}_N}:=\Big((e_k)^{\otimes_s s+p}\otimes_s(e_{-k})^{\otimes_s s}\otimes_s(e_0)^{\otimes_s N-2s-p}\Big)$$ and
 $$ \tilde{y}(k):= \frac{8\pi a}{|B_L| k\sqrt{k^2+16\pi a\rho}} .$$
Note that for every state in $|\Psi_{p,N}(k)\rangle$ there is a degenerate state $|\Psi_{p,N}^{(-)}(k)\rangle$ involving a linear combination of the tensor products $(e_k)^{\otimes_s s}\otimes_s (e_{-k})^{\otimes_s s+p}\otimes_s (e_0)^{\otimes_s N-2s-p}$. Therefore, the degeneracy of states with energy $E$ is $2N+1$, exactly the same as the degenerate subspaces of $\mathcal{H}_\mathrm{LHY}$. The expression $\exp(\mathcal{W})|\Psi_{p.N}(k)\rangle$ then gives the eigenstates of $\mathcal{H}_\mathrm{Wu}$.
 \color{black}
 
\section{Discussion and Conclusion}
 We conclude by suggesting a few extensions of the framework introduced in this paper. There are two apparent ways to extend the work here to study the bose gas in a more general trapping potential. \textbf{(i):} We can consider keeping the interaction potential $\upsilon(x,y)$ instead of replacing it by the pseudopotential, and introduce a compact cut-off of $\hat{\upsilon}$ in momentum space. Carrying out the steps of the approximation scheme then results in a quadratic Hamiltonian of the form
\begin{equation}\label{interaction2}
\mathcal{H}_\mathrm{app1}:= \sum_{k}{(k^2+\frac{N}{2V}\hat{\upsilon}(k))a^\ast_k a_k}+\frac{N}{2V}\sum_{k\in\mathbb{Z}^+_L, \,p\in I(k)}{\hat{\upsilon}(k,p)(a^\ast_k a^\ast_{p} + a_k a_{p}}).
\end{equation}
Here, $I(k)$ is a finite set in $-\mathbb{Z}^{+}_L$ that contains $-k$ for every $k\in\mathbb{Z}^+_L$. 
The more complicated coupling present means that we cannot use the model Fock space $\mathbb{F}_\mathrm{ab}$ to describe eigenstates of $\mathcal{H}_\mathrm{app1}$. We nonetheless hypothesize that the essential results of this work (regarding the structure \eqref{essential eigenstate} of eigenstates for $\mathcal{H}_\mathrm{app1}$) hold when using a non-unitary transformation of the form
$$\exp(\mathcal{Q}),\quad \mathcal{Q} := \sum_{k\in\mathbb{Z}^+_L,\,p\in I(k)}{\{-\alpha(k,p)a_k^\ast a_p^\ast\}}.$$

\textbf{(ii):} In \color{black}\cite{Grillakis Margetis Sorokanich}\color{black}~, we used a framework similar to that of section \eqref{PC Ham} to write the eigenstates of a particle conserving Hamiltonian on $\mathbb{F}_N$, which describes a dilute gas in a generic trapping potential $V(x)$. In in this setting the condensate $\phi(x)$ satisfies a Hartree-type equation. If $a_{\perp,x}$ denotes the field operator on $\phi^\perp$, the approximate Hamiltonian reads:
\begin{equation}\label{eq:H-app}
\mathcal{H}_\mathrm{app2} := NE_H + h(a_\perp^\ast,a_\perp) +\frac{(a_{\overline{\phi}})^2}{2N} f_\phi(a_\perp^\ast,a_\perp^\ast) +\frac{(a^\ast_{\phi})^2}{2N}\overline{f_\phi}(a_\perp,a_\perp) 
\end{equation}
where, e.g., $h(a_\perp^\ast,a_\perp) = \iint{dxdy\{ h(x,y)a^\ast_{\perp,x}a_{\perp,y}\}}$, and the corresponding kernels are
$$h(x,y):= \big\{-\Delta+V(x)+N(v\ast|\phi|^2)(x)\big\}\delta(x,y) + N\phi(x)\upsilon(x,y)\overline{\phi(y)}-\mu~,$$ for $\mu>0$,  and
$$f_\phi(x,y):= N\phi(x) \upsilon(x,y) \phi(y).$$
We showed that there is a \textit{non-orthogonal} basis $\{u_j(x)\}_{j\in\mathbb{N}}$ that plays the role of the momentum states $\{e_k(x)\}_{k\in\mathbb{Z}_L}$ in the construction of eigenstates of $\mathcal{H}_\mathrm{app2}$. It remains an open question as to whether these states can be utilized, in the spirit of the work presented here, to write many body excitations of the \textit{quadratic} approximation to $\mathcal{H}_\mathrm{app2}$:
\begin{equation}\label{Happ2}
\mathcal{H}_{\mathrm{app}2}\approx N E_{\text{H}}+h(a_\perp^{\ast}, a_\perp)+\frac{1}{2}f_\phi(a_\perp^{\ast},a_\perp^{\ast})+\frac{1}{2} \overline{f_\phi}(a_\perp,a_\perp).
\end{equation}
We note the similarity of \eqref{Happ2} to the approximate quadratic Hamiltonian of Fetter \color{black}\cite{Fetter, Fetter2009}.\color{black}
\section{Appendix}
\subsection*{Derivation of formula \eqref{formula1}}
Let $E\in \mathbb{C}$ denote the eigenvalue of $\mathcal{H}^{(\alpha_c)}_\mathrm{ab}$ associated with eigenvector $|\Psi_E\rangle\in\mathbb{F}_{ab}$. Expanding $|\Psi_E\rangle$ in the occupation number basis, i.e., 
\begin{equation}\label{expansion1}
|\Psi_E\rangle=\sum_{\{m_a,m_b\in\mathbb{N}\}}{c_{m_a m_b}|m_a,m_b\rangle},
\end{equation}
yields the following relation between coefficients
\begin{equation} \label{eq:cScheme1}
\frac{1}{2}(m_a+m_b)c_{m_a,m_b}+\tilde y \sqrt{(m_a+1)(m_b+1)}c_{m_a+1,m_b+1} = Ec_{m_a,m_b}.
\end{equation}
Next define $E_{m_a,m_b}:=\frac{1}{2}(m_a+m_b)$, which is the energy of the state $|m_a,m_b\rangle$ as an eigenvector of the operator $\frac{1}{2}(a^*a+b^*b)$, so that
\begin{equation}\label{eq:iteration}
c_{m_a+1,m_b+1}=\frac{\big(E-E_{m_a,m_b}\big)}{\tilde y \sqrt{(m_a+1)(m_b+1)}}c_{m_a,m_b}.
\end{equation}
Repeating this relation $s$ times results in the formula
\begin{equation}\label{eq:cScheme2}
c_{m_a+s,m_b+s} = \tilde{y}^{-s}\frac{(E-E_{m_a+s-1,m_b+s-1})\cdot\dots\cdot(E-E_{m_a,m_b})}{\sqrt{(m_a+s)\cdot \dots\cdot(m_a+1)(m_b+s)\cdot\dots\cdot(m_b+1)}}c_{m_a,m_b}.
\end{equation}
We now fix $\vec{m}:=(m_a,m_b)$, and define the energy difference $\Theta := E-E_{\vec{m}}$ as well as the shorthand $\vec{m}+s:= (m_a+s,m_b+s)$. Equation \eqref{eq:cScheme2} can be rewritten using the generalized binomial coefficient $\binom{\Theta}{s}:= \frac{\Gamma(\Theta+1)}{\Gamma(s+1)\Gamma(\Theta-s+1)}$ via
\begin{equation}\label{CmplusS}
\begin{split}
c_{\vec{m}+s} &= \tilde{y}^{-s}\frac{(\Theta-s+1)\cdot\dots\cdot(\Theta-1)\Theta}{\sqrt{(m_a+1)\cdot\dots\cdot(m_a+s)(m_b+1)\cdot\dots\cdot(m_b+s)}}c_{\vec{m}} \\
&=\tilde{y}^{-s}\binom{\Theta}{s}\frac{s!}{\sqrt{(m_a+1)\cdot\dots\cdot(m_a+s)(m_b+1)\cdot\dots\cdot(m_b+s)}}c_{\vec{m}} \\
&= \tilde{y}^{-s}\binom{\Theta}{s}\sqrt{\binom{m_a+s}{s}\binom{m_b+s}{s}}^{\,\,-1}c_{\vec{m}} .
\end{split}
\end{equation}
With the quantities $\Theta, \vec{m}$ fixed, the single coefficient $c_{\vec{m}}$ uniquely determines all other coefficients $c_{\vec{m}\pm s}$. Imposing the normalization condition
\begin{equation}\label{normalization}
\sum_{s=-\min{(m_a,m_b)}}^\infty{|c_{m_a+s,m_b+s}|^2} = 1,
\end{equation}
then determines $c_{\vec m}$ uniquely. Without loss of generality, we take $\vec{m}=0$.

\subsubsection*{Difference scheme for eigenstates of $\mathcal{H}^{(\alpha)}_\mathrm{ab}$}

An iteration scheme can be written in the particle number basis for eigenstates of $\mathcal{H}^{(\alpha)}_\mathrm{ab}$ when $0\le\alpha\le\alpha_c$. We again assume that the state $|\Psi_E\rangle$ with energy $E\in\mathbb{C}$ has the expansion \eqref{expansion1}. It suffices to fix $p\in\mathbb{N}$ and consider only linear combinations of states $|m_a,m_b\rangle = |p+s,s\rangle$ for $s = 0,1,2,\dots$, so that we can write $c_s := c_{p+s,s}$ and the eigenvalue equation reads:
\begin{align*}
\frac{1}{2}\sum_{s=0}^\infty{c_s(p+2s)|p+s,s\rangle} &+ y_1(\alpha)\sum_{s=0}^\infty{c_s\sqrt{(p+s)s}|p+s-1,s-1\rangle} \\
&+ y_2(\alpha)\sum_{s=0}^\infty{c_s\sqrt{(p+s+1)(s+1)}|p+s+1,s+1\rangle} \\
&= E\sum_{s=0}^\infty{c_s|p+s,s\rangle}.
\end{align*}  
For $E\in\mathbb{C}$ we arrive at the difference scheme
\begin{equation} \label{diffScheme2}
\left(\frac{p}{2}+s\right)c_s+y_1(\alpha)c_{s+1}\sqrt{(p+s+1)(s+1)}+y_2(\alpha)c_{s-1}\sqrt{(p+s)s}=Ec_s.
\end{equation}
\subsection*{Properties of the generating function $G_{|\Psi_E\rangle}(z)$} 
The ordinary differential equation satisfied by $G_{|\Psi_E\rangle}(z)$ is the result of multiplying equation \eqref{diffScheme2} by $z^s$ and summing over $s$; the solution $G_{|\Psi_E\rangle}(z) = G_{hom}(\Psi_E,z) + I_{\Psi_E}(z)$ follows by standard arguments. We now give a summary of the proof that for $p>0$ and $|z_+|<1$, the function $G_{|\Psi_E\rangle}(z)$ defined in Proposition \eqref{analytic} is analytic in the disk. This follows by direct computation.

By change of variables we write
\begin{equation}
G_{|\Psi_E\rangle}(z)=\frac{C_0y_1(\alpha)p}{y_2(\alpha)}(z-z_+)^B(z-z_-)^C\int_0^1{t^{p-1}(tz-z_+)^{-(1+B)}(tz-z_-)^{-(1+C)}dt},
\end{equation}
and utilize the two formulas: 
\begin{equation}
\Bigg(1-t\Big(\frac{z}{z_-}\Big)\Bigg)^{B-p} = \sum_{n=0}^\infty{\frac{\Gamma(-B+p+n)}{\Gamma(-B+p)}\frac{1}{n!}t^n\Big(\frac{z}{z_-}\Big)^n},
\end{equation}
as well as
\begin{equation}
\int_{0}^1{t^{p+n-1}\Big(1-t\frac{z}{z_+}\Big)^{-(1+B)}dt} = \frac{1}{(p+n)}F\Big(B+1, p+n; 1+p+n; \frac{z}{z_+}\Big),
\end{equation}
where $F$ is the hypergeometric function. A straightforward but tedious substitution of these formulas into $G_{|\Psi_E\rangle}(z)$ yields
\begin{equation}\label{genfunctionappendix}
\begin{split}
G_{|\Psi_E\rangle}(z) &= C_0 p \Big(\frac{1}{B}\Big)\Big(1-\frac{z}{z_-}\Big)^{p-1-B}\frac{\Gamma(-B)}{\Gamma(-B+p)}\Big(\frac{z}{z_+}\Big)^{-p}\Big(1-\frac{z_+}{z_-}\Big)^{-p} \mathcal{G}(z),
\end{split}
\end{equation}
where $\mathcal{G}(z)$ is an analytic function for $\Big|\frac{z}{z_-}\Big|<1$ given by 
\begin{equation}
\mathcal{G}(z):= \sum_{n=0}^\infty{\frac{\Gamma(-B+p+n)}{\Gamma(-B+p)}\frac{1}{n!}\Big(\frac{z}{z_-}\Big)^n F\Big(p+n-B, 1; 1-B; 1-\frac{z}{z_+}\Big)}
\end{equation}
Equation \eqref{genfunctionappendix} makes it clear that $G_{|\Psi_E\rangle}$ manifests a singularity unless $B=m\in\mathbb{N}$ and $B\ge p$. This is the basis of our assertion in the proposition.
\subsection*{Properties of the hypergeometric functions}

Here we provide direct proofs of properties \textbf{(i)} and \textbf{(ii)} used in the proof of Theorem \eqref{Density}.

 \textbf{Property (i)} For $p=0,1,2,\dots$ , $$mzF(-m+1)=-(p+1+2N)F+(p+1+N)F(-N-1)+NF(-N+1).$$
\begin{proof}
Using the definition of the hypergeometric function, the left-hand-side of the above relation reads
\begin{align*}
mz\sum_{s=0}^\infty{\frac{z^s}{s!}\frac{(-m+1)_s(-N)_s}{(p+1)_s}} &= \sum_{s=0}^\infty{\frac{z^{s+1}}{(s+1)!}\frac{(-m)_{s+1}(-N)_{s+1}}{(p+1)_{s+1}}\left\{\frac{m(-m+1)_s}{(-m)_{s+1}}\cdot\frac{(-N)_s}{(-N)_{s+1}}\cdot\frac{(p+1)_{s+1}}{(p+1)_{s}}\right\}} \\
&= \sum_{s=0}^\infty{\frac{z^{s+1}}{(s+1)!}\frac{(-m)_{s+1}(-N)_{s+1}}{(p+1)_{s+1}}\left\{\frac{-(s+1)(p+s+1)}{(-N+s)}\right\}}.
\end{align*}
The right-hand-side of \textbf{(i)} meanwhile reads
\begin{align*}
\sum_{s=0}^\infty{\frac{z^s}{s!}\frac{(-m)_s(-N)_s}{(p+1)_s}\left\{-(p+1+2N)+(p+1+N)\frac{(-N-1)}{(-N+s-1)}+N\frac{(-N+s)}{-N}\right\}}.
\end{align*}
It must be shown that the $(s+1)^{th}$ term of the RHS agrees with the $s^{th}$ term of the LHS, and that the $s=0$ term on the RHS vanishes. Indeed
\begin{equation}
\frac{(-m)_0(-N)_0}{(p+1)_0}\left\{-(p+1+2N)+(p+1+N)+N\right\}=0.
\end{equation}
This shows the second statement, while the coefficient of $\frac{z^{s+1}}{(s+1)!}\frac{(-m)_{s+1}(-N)_{s+1}}{(p+1)_{s+1}}$ on the RHS is 
\begin{align*}
&-(p+1+2N)+(p+1+N)\frac{-N-1}{-N+s}+N\frac{(-N+s+1)}{-N} \\
&=\frac{-(p+N+s+2)(-N+s)-(p+1+N)(N+1)}{(-N+s)} \\
&= \frac{-(s+1)(p+s+1)}{(-N+s)}
\end{align*}
The case $N=0$ follows by comparing the LHS
\begin{equation}
mz\sum_{s=0}^\infty{\frac{(-m+1)_s(0)_s}{(p+1)_s}\frac{z^s}{s!}}=mz
\end{equation}
to the RHS
\begin{align*}
-(p+1)\cdot1+(p+1)\sum_{s}{\frac{(-m)_s(-1)_s}{(p+1)_s}\frac{z^s}{s!}}=-(p+1)+(p+1)\left\{1+\frac{(-m)(-1)}{(p+s)}z\right\}=mz
\end{align*}
\end{proof}
\textbf{Property (ii):} For $a<0$, 
\begin{equation}
\frac{d}{dz}z^aF\left(a,b,c,\frac{1}{z}\right)=az^{a-1}F\left(a+1,b,c,\frac{1}{z}\right).
\end{equation} 
\begin{proof}
The LHS reads 
\begin{equation}
z^{a-1}\sum_{s=0}^\infty{\frac{(a)_s(b)_s}{(c)_s}\frac{z^{-s}}{s!}\left\{a-s\right\}}
\end{equation}
while the RHS reads
\begin{equation}
az^{m-s}\sum_{s=0}^\infty{\frac{(a+s)_s(b)_s}{(c)_s}\frac{z^{-s}}{s!}}.
\end{equation}
Proving the formula then follows from the relation $(a-s)(a)_s=a(a+1)_s$. The exceptional case $b=0$ is handled by
\begin{equation}
\frac{d}{dz}\left\{z^mF(-a,0,c,\frac{1}{z})\right\}=\frac{d}{dz}z^m.
\end{equation}
\end{proof}

\subsection*{Acknowledgements}
The author (S.S.) is indebted to Professor Dionisios Margetis for crucial discussions on the non-Hermitian scheme, as well as several of the analyticity arguments, that feature in this work.




\end{document}